\let\oldsqrt\sqrt
\def\sqrt{\mathpalette\DHLhksqrt}
\def\DHLhksqrt#1#2{%
\setbox0=\hbox{$#1\oldsqrt{#2\,}$}\dimen0=\ht0
\advance\dimen0-0.2\ht0
\setbox2=\hbox{\vrule height\ht0 depth -\dimen0}%
{\box0\lower0.4pt\box2}}
\newcommand{\tr}{\operatorname{tr}}
\newcommand{\Ecal}{\mathcal{E}}
\newcommand{\Hcal}{\mathcal{H}}
\newcommand{\Ical}{\mathcal{I}}
\newcommand{\Mcal}{\mathcal{M}}
\newcommand{\Tcal}{\mathcal{T}}
\newcommand{\Ucal}{\mathcal{U}}
\newcommand{\Vcal}{\mathcal{V}}
\newcommand{\Lcal}{\mathcal{L}}
\newcommand{\Scal}{\mathcal{S}}
\newcommand{\Ncal}{\mathcal{N}}
\newcommand{\Jcal}{\mathcal{J}}
\newcommand{\Rcal}{\mathcal{R}}
\newcommand*\xbar[1]{%
  \hbox{%
   \vbox{%
    \hrule height 0.5pt % The actual bar
    \kern0.5ex%     % Distance between bar and symbol
    \hbox{%
     \kern-0.2em%   % Shortening on the left side
     \ensuremath{#1}%
     \kern-0.0em%   % Shortening on the right side
    }%
   }%
  }%
}
\def\BraVert{\egroup\,\mid\,\bgroup}
\def\ketbra#1#2{\ket{#1\vphantom{#2}}\!\bra{#2\vphantom{#1}}}
\def\bra#1{\mathinner{\langle{#1}|}}
\def\ket#1{\mathinner{|{#1}\rangle}}
\newtheorem{theorem}{Theorem}
\newtheorem{remark}{Remark}
\newtheorem{example}{Example}
\newtheorem{corollary}{Corollary}
\begin{document}

\title{Entanglement, non-Markovianity, and causal non-separability}

\author{Simon Milz}
\email{simon.milz@monash.edu} 
\affiliation{School of Physics and Astronomy, Monash University, Clayton, Victoria 3800, Australia}
  
\author{Felix A. Pollock}
%\email{felix.pollock@monash.edu}
\affiliation{School of Physics and Astronomy, Monash University, Clayton, Victoria 3800, Australia}

\author{Thao P. Le}
%\email{asdfsadf}
\affiliation{Dept. of Physics and Astronomy, University College London, Gower Street, London WC1E 6BT, UK}
\affiliation{School of Physics and Astronomy, Monash University, Clayton, Victoria 3800, Australia}

\author{Giulio Chiribella}
\affiliation{Department of Computer Science, University of Oxford, Wolfson Building, Parks Road, UK}
\affiliation{Canadian Institute for Advanced Research,
CIFAR Program in Quantum Information Science, Toronto, ON M5G 1Z8}

\author{Kavan Modi}
%\email{kavan.modi@monash.edu}
\affiliation{School of Physics and Astronomy, Monash University, Clayton, Victoria 3800, Australia}

\begin{abstract}
Quantum mechanics, in principle, allows for processes with indefinite causal order. However, most of these causal anomalies have not yet been detected experimentally. We show that every such process can be simulated experimentally by means of non-Markovian dynamics with a measurement on additional degrees of freedom. Explicitly, we provide a constructive scheme to implement arbitrary acausal processes. Furthermore, we give necessary and sufficient conditions for open system dynamics with measurement to yield processes that respect causality locally, and find that tripartite entanglement and nonlocal unitary transformations are crucial requirements for the simulation of causally indefinite processes. These results show a direct connection between three counter-intuitive concepts: non-Markovianity, entanglement, and causal indefiniteness.
\end{abstract}

\date{\today}
\maketitle

%%%%%%%%%%%%%%%%%%%%%%%%%%%%%
%%%%%%%%%%%%%%%%%%%%%%%%%%%%%
\section{Introduction}
\label{sec::Intro}
%%%%%%%%%%%%%%%%%%%%%%%%%%%%%
%%%%%%%%%%%%%%%%%%%%%%%%%%%%%

Temporal order is one of the fundamental pillars that both our everyday understanding of the world, as well as our physical theories are built on. Events, no matter how complicated the underlying dynamical theory, seem to happen in a causal succession and there is a clear arrow of time that defines in which direction they can influence each other. However, the impression of causal order might only be locally true. Future experiments may challenge the idea that  causal order is fundamental, and may reduce it to a property that exists locally but is violated globally. For example, an experiment could consist of two parties (Alice and Bob) conducting measurements in their separated laboratories. The temporal order of events would be heralded by the joint probability distributions of their measurement outcomes. While Alice and Bob experience a well-defined temporal order in their respective laboratories, it is fathomable that a third party (Charlie) that receives measurement data from both Alice and
Bob is unable to assign a relative causal order to them.

An example of causally unordered process is the the quantum switch, theoretically introduced in \cite{PhysRevA.88.022318} and experimentally realized in \cite{procopio_experimental_2015, rubino_experimental_2017}.      Besides the quantum switch,   no other  exotic causal structure has been implemented experimentally so far. Nonetheless, the mathematical description of such structures is well developed ~\cite{PhysRevA.88.022318, OreshkovETAL2012} and is subject to active research (see, \textit{e.g.}, Refs.~\cite{PhysRevA.90.010101, AraujoETAL2015, Brukner2015Bounding, oreshkov_causal_2016, branciard_simplest_2016, araujo_purification_2017}). The main mathematical object to represent general processes is the \textit{process matrix}, introduced in~\cite{OreshkovETAL2012} for two parties and later extended to multiple parties in Ref.~\cite{oreshkov_causal_2016}.   In Ref.~\cite{OreshkovETAL2012}, the authors showed that this framework allows for \textit{causally non-separable} process matrices -- \textit{i.e.}, process matrices that cannot be written as a probabilistic mixture of causal ones. These causally non-separable process matrices go beyond what can be described by quantum mechanics that respects causal order and also encapsulate processes that can violate \textit{causal inequalities}, \textit{i.e.}, processes that do not allow for an underlying causal model. 
%This situation for temporal correlations is reminiscent of the analogue case for spatial correlations; quantum states that cannot be written as a probabilistic mixture of product states can violate Bell inequalities.\footnote{In the same way that causal non-separability is a necessary, but not sufficient prerequisite for the violation of a causal inequality, there are entangled stated that do not violate a Bell inequality~\cite{werner_quantum_1989}.}

By definition, no process that is compatible  with a global causal order exhibits correlations that are described by a causally non-separable process matrix. However, processes without causal order can be simulated non-deterministically, \textit{i.e.}, by conditioning the collection of data on an additional measurement outcome. For example, Charlie might measure an additional system that he possesses, which has interacted with Alice and Bob. He could choose to only record the data he receives from Alice and Bob when the measurement of his system yields a particular outcome. Even if the causal ordering of Alice's and Bob's laboratory is fixed, the data that Charlie records could lead him to believe that there is no temporal ordering between Alice and Bob. More generally, it has been shown that \textit{any} process matrix -- causally ordered or not -- can be implemented experimentally by a quantum circuit (\textit{i.e.}, a causally ordered process) with additional measurement~\cite{chiribella_transforming_2008, chiribella_theoretical_2009, 1367-2630-18-7-073037, silva_connecting_2017}.\footnote{In a slightly different context, schemes involving conditioning of data are also actively investigated both theoretically as well as experimentally with respect to the simulation of closed timelike curves (see, \textit{e.g.}, Refs.~\cite{Bennett2005,lloyd_quantum_2011, lloyd_closed_2011,Genkina2012}).} 

In this article, we consider the task of  implementing  processes with indefinite causal order. We answer two natural questions: Given a process, what is its circuit implementation (with measurement)? What resources are necessary to simulate a causally non-separable process? We concretely relate process matrices to  {\em quantum combs}  \cite{chiribella_transforming_2008, chiribella_quantum_2008, chiribella_theoretical_2009} and \textit{process tensors}~\cite{pollock_complete_2015, 1367-2630-18-6-063032}, which are a generalisation of completely positive maps used to describe general causally-ordered processes and non-Markovian quantum phenomena. 
%Specifically, every causally nonn-separable  process  can be considered as an element of a generalised positive operator valued measure (POVM) and can hence be implemented via a unitary evolution followed by a projective measurement. 
Building on this relation, we provide a  general implementation scheme for arbitrary processes. This scheme requires a genuinely tripartite entangled initial state. Moreover, we provide necessary and sufficient condition for a general circuit with measurement to yield a proper process, and give an explicit example of causally non-separable process matrices that can be simulated with a probability that exceeds $50\%$. Finally, we show that -- independent of the implementation scheme -- the simulation of causally non-separable process matrices requires both genuine tripartite entanglement in the initial state, as well as nonlocal unitary dynamics, \textit{i.e.}, it requires the underlying causal process to be non-Markovian. These results provide a constructive way to experimentally simulate arbitrary process matrices and establish a clear connection between entanglement, nonlocality and acausality.

%%%%%%%%%%%%%%%%%%%%%%%%%%%%%
%%%%%%%%%%%%%%%%%%%%%%%%%%%%%
\section{Causally ordered processes}
\label{sec::Caus_ord}
%%%%%%%%%%%%%%%%%%%%%%%%%%%%%
%%%%%%%%%%%%%%%%%%%%%%%%%%%%%

\begin{figure}
\centering
\includegraphics[scale=1.1]{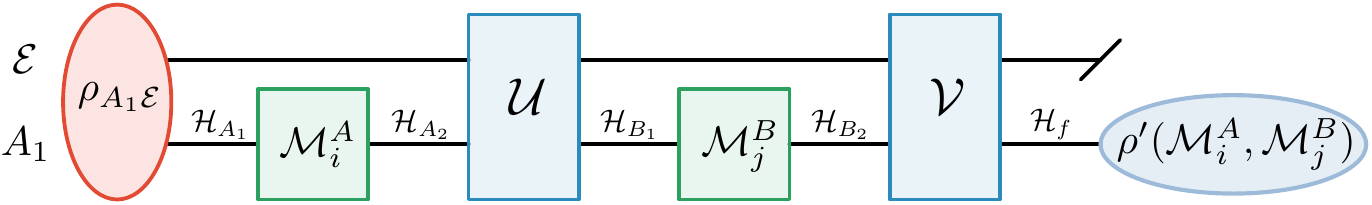}
\caption{\textit{Circuit representation of a generic causally ordered process Alice $\rightarrow$ Bob (see~\eqref{eqn::CausChann}).} The `system' $A_1$ (bottom line) is initially correlated with the `environment' $\Ecal$ (top line). It then enters Alice's lab, who implements a CP operation $\Mcal^A_i$ on it. The system interacts with $\Ecal$ again, then enters Bob's lab, who implements a CP operation $\Mcal^B_j$ on it. After interacting with $\Ecal$ once again the system is in the final state $\rho'$, which is a function of Alice and Bob's operations. Note that the Hilbert spaces are allowed to change after each operation/interaction.}
\label{fig::CausChann}
\end{figure}

In order to render the structure of causally unordered processes more transparent, we reiterate existing results about \textit{causally ordered} processes. These are processes where the temporal order of the operations performed by Alice and Bob, respectively, is well-defined. Throughout this article, we will mainly focus on the two-party case. Hereinafter, we consider the following scenario: Alice (Bob) has a quantum instrument $\Jcal_A \ (\Jcal_B)$ in her (his) laboratory, \textit{i.e.}, a set of completely positive (CP) trace non-increasing maps $\left\{\Mcal_i^X\right\}$, $\Mcal_i^X: \Lcal(\Hcal_{X_1}) \rightarrow \, \Lcal(\Hcal_{X_2})$ with $X \in \{A,B\}$. $\Lcal(\Hcal_{X_y})$ represents the space of bounded operators on $\Hcal_{X_y}$ and the input Hilbert space $\Hcal_{X_1}$ is not necessarily the same as the output Hilbert space $\Hcal_{X_2}$. The labels $i$ correspond to the outcomes of the instrument, and the entire procedure, $\sum_{i=1} \Mcal_i^X$, is a CP and trace-preserving (CPTP) map. For example, suppose Alice, upon receiving a state $\rho \in \Lcal(\Hcal_{A_1})$, performs a measurement in the computational basis. Upon observing outcome $m$ she prepares a state $\rho_m \in \Lcal(\Hcal_{A_2})$ and sends it forward. This choice of instrument corresponds to the CPTP map $\sum_m \Mcal_m^A\left[\rho\right] = \sum_m \bra{m}\rho \ket{m} \rho_m$, where for each $m$ we have the CP map $\Mcal_m^A\left[\rho\right] = \bra{m}\rho \ket{m} \rho_m$.

A generic causally ordered process, where Alice goes before Bob, is of the following form: Alice receives a quantum system $\Scal$ in state $\rho \in \Lcal(\Hcal_{A_1})$ -- possibly correlated with an environment $\Ecal$ -- and performs (non-deterministically) a CP operation $\Mcal_i^A$ on it. Alice's output state is sent to Bob via a quantum communication channel. He also performs  (non-deterministically) a CP operation $\Mcal_j^B$. In the end, Bob's output state is sent through another quantum communication channel to yield the final state $\rho' \in \Lcal(\Hcal_f)$. This scenario is depicted in terms of a circuit representation in Fig.~\ref{fig::CausChann}. The two quantum communication channels are -- in general -- correlated and can be represented by two system-environment unitary maps $\Ucal$ and $\Vcal$ (where, \textit{e.g.}, $\Ucal\left[\rho_{\Scal\Ecal}\right] = U\rho_{\Scal\Ecal} U^{\dagger}$, with $UU^\dagger = \openone_{\Scal\Ecal}$), and consequently, the final state, which depends on the performed CP operations, can be written as
\begin{gather}
\label{eqn::CausChann}
\rho'(\Mcal_i^A,\Mcal_j^B) = \tr_{\Ecal}\left\{\left(\Vcal\circ\Mcal_j^B \circ \Ucal \circ \Mcal_i^A\left)\right[\rho_{\Scal\Ecal}\right]\right\}\, ,
\end{gather}
where we have omitted the respective identity operators on the environment. We emphasize the dependence of the output state on Alice's and Bob's quantum operations, while we omit the dependence on the unitary maps $\Ucal$ and $\Vcal$. This is because we regard Alice's and Bob's operations as variables that Alice and Bob can choose freely, while the rest of the circuit is fixed.    

Since  $\Scal$ is initially correlated with $\Ecal$, and $\Ucal$ and $\Vcal$ act on the same $\Ecal$, this noisy process can be temporally correlated. Such processes are also known as a \textit{non-Markovian processes}.   Mathematically, they can be described with the framework of quantum combs \cite{chiribella_transforming_2008, chiribella_quantum_2008, chiribella_theoretical_2009}, or equivalently, with the framework of process tensors ~\cite{pollock_complete_2015}, where every process of the form~\eqref{eqn::CausChann} can be rewritten as a linear mapping $\Tcal_{2:0}$ from the performed operations $\Mcal_i^A$ and $\Mcal_j^B$ to the final state $\rho'(\Mcal_i^A,\Mcal_j^B)$. Explicitly, we have 
\begin{gather}
\label{eqn::ProcTens}
\rho'(\Mcal_i^A,\Mcal_j^B) = \Tcal_{2:0}\left[\Mcal_i^A,\Mcal_j^B\right] ,
\end{gather}
where the linear map $\Tcal_{2:0}$ is called a quantum supermap  \cite{chiribella_theoretical_2009}, or a process tensor \cite{pollock_complete_2015}. The notation  $\Tcal_{2:0}$  refers to the fact that the process has two open slots and produces an output with no open slot.  Similar frameworks were proposed by Gutoski and Watrous \cite{gutoski2007toward} and Hardy \cite{hardy_operator_2012}.

A graphical representation for  multitime processes  is provided in Fig.~\ref{fig::CausChannSimp}. The linear map    $\Tcal_{2:0}$   is the generalisation of a quantum channel to multiple time steps~\cite{milz_introduction_2017}. Such a generalization  straightforwardly accounts for temporal correlations, \textit{i.e.}, non-Markovian open quantum dynamics. When a process is Markovian, it  reduces to a sequence of CPTP maps (see Fig.~\ref{fig::CausalChannMarkov}). We will show below that Non-Markovianity plays an important role when simulating causally inseparable process. In the following section, we will provide the mathematical restrictions that have to be imposed on the linear map  $\mathcal{T}_{2:0}$ in order for it to describe a proper causally ordered process.

\begin{figure}
\centering
  \includegraphics[scale=1.1]{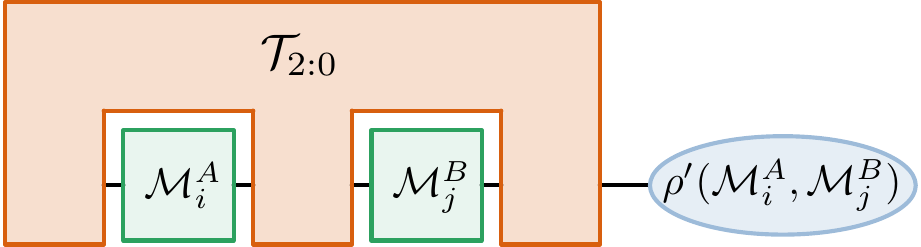}
  \caption{\textit{General multitime process.} The final state of any causally ordered process where Alice acts before  Bob can be written as the result of a CP map $\Tcal_{2:0}$ acting on the performed operations $\Mcal_i^A$ and $\Mcal_j^A$. The CP map $\Tcal_{2:0}$ contains all the operations that are not performed by  Alice or Bob, \textit{i.e.}, the total initial state $\rho_{\Scal\Ecal}$ and the system-environment unitary maps $\Ucal$ and $\Vcal$ (see Fig.~\ref{fig::CausChann}).}
  \label{fig::CausChannSimp}
\end{figure}
\begin{figure}
  \centering
  \includegraphics[scale=1.1]{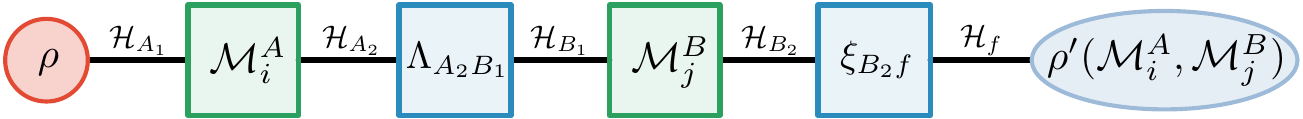}
  \caption{\textit{General  Markovian process.} In the absence of memory effects, the final state $\rho_\Scal'(\Mcal_i^A,\Mcal_j^B)$ is given by a concatenation of CP maps ($\Mcal_i^A$ and $\Mcal_j^B$) and CPTP maps ($\Lambda_{A_2B_1}$ and $\xi_{B_2f}$) acting on an initially uncorrelated state $\rho$. This property no longer holds for processes with memory; for such processes the only meaningful description is in terms of a mapping from performed operations to a final state~\cite{chiribella_theoretical_2009,modi_operational_2012, PhysRevLett.114.090402, pollock_complete_2015, 1367-2630-18-6-063032, milz_introduction_2017}.}
   \label{fig::CausalChannMarkov}
\end{figure}

%%%%%%%%%%%%%%%%%%%%%%%%%%%%%%%%%%%%%%%%%%%
\subsection{Properties of causally ordered processes}
\label{sec::PropCaus}
%%%%%%%%%%%%%%%%%%%%%%%%%%%%%%%%%%%%%%%%%%%

Mathematically, the map $\Tcal_{2:0}$ is a mapping from  pairs of CP maps to a final state $\rho' \in \Lcal(\Hcal_f)$. %All the statements we make below can be generalised to $k$-step processes in a straightforward manner. 
This mathematical structure can be made more manifest by employing the \textit{Choi-Jamio{\l}kowski isomorphism} (CJI)~\cite{Choi1975, jamiolkowski_linear_1972}:

Let $\Lcal\left(\Hcal_{X_1}\right)$ denote the set of linear operators on the Hilbert space $\Hcal_{X_1}$. Every CP map $\Mcal^X: \Lcal \left( \Hcal_{X_1} \right) \rightarrow \Lcal \left( \Hcal_{X_2} \right)$ can be mapped isomorphically onto a positive matrix $M^{X_2X_1} \in \Lcal\left(\Hcal_{X_2} \otimes \Hcal_{X_1}\right)$. For an arbitrary CP map $\Mcal_k^X: \Lcal\left(\Hcal_{X_1}\right) \rightarrow \Lcal\left(\Hcal_{X_2}\right)$ the \emph{Choi state} of the maps is given by
\begin{gather}
M_k^{X_2 X_1} \coloneqq d_{X_1} (\Mcal^X_k \otimes \Ical) \left[ \ketbra{\phi^+_{X_1}}{\phi^+_{X_1}}\right] \coloneqq (\Mcal^X_k \otimes \Ical) \left[ \phi^+_{X_1}\right]\, ,
\label{eqn::CJI}
\end{gather}
 where $\ket{\phi^+_{X_1}} = \frac{1}{\sqrt{d_{X_1}}}\sum_{n=1}^{d_{X_1}}\ket{nn} \in\Hcal_{X_1}\otimes\Hcal_{X_1}$ is a normalised maximally entangled state and $\Ical$ is the identity map in the appropriate dimension. For the map to also be trace preserving, it has to satisfy the additional constraint $\tr_{X_2}\left(M^{X_2X_1}\right) = \openone_{X_1}$, \textit{i.e.}, the trace over the Hilbert space of the output of the map has to yield the identity matrix on the input space. The action of a CP map $\Mcal_k^{X}$ on a state $\rho \in \Lcal(\Hcal_{X_1})$ can be written in terms of its Choi state:
 \begin{gather}
 \label{eqn::ActionChoi}
 \Mcal^{X}_k(\rho) = \tr_{X_1}\left[\left(\openone_{X_2} \otimes \rho^{\mathrm{T}}\right) M_k^{X_2X_1}\right]\, ,
 \end{gather}
 where $\boldsymbol{\cdot}^{\mathrm{T}}$ denotes the transpose with respect to the computational basis. 
 
 Similar mathematical relations hold for multitime processes. Specifically, a process $\Tcal_{2:0}$ that maps a pair of CP maps into a state, can be represented by a Choi state $\Upsilon_{2:0} \in \Lcal(\Hcal_f\otimes \Hcal_{A_2} \otimes \Hcal_{A_1} \otimes \Hcal_{B_2} \otimes \Hcal_{B_1})$. The  action of the process $\Tcal_{2:0}$  on the Choi states of the two input CP maps  can -- in clear analogy to~\eqref{eqn::ActionChoi} -- be written as~\cite{chiribella_theoretical_2009}
\begin{gather}
\label{eqn::ActionChoiProc}
\rho'(\Mcal_i^{A},\Mcal_j^{B}) = \tr_{\Scal_2\Scal_1}\left[\left(\openone_f \otimes (M_i^{A_2A_1})^{\mathrm{T}} \otimes (M_j^{B_2B_1})^{\mathrm{T}}\right)\Upsilon_{2:0}\right]\, ,
\end{gather}
where we have employed the convention $\Scal_y = A_yB_y$. The corresponding probability to measure the outcomes $i$ and $j$ (given the instruments $\Jcal_A$ and $\Jcal_B$) in the same run can be computed via~\cite{chiribella_theoretical_2009}
\begin{gather}
p(i,j|\Jcal_A,\Jcal_B) = \tr_{\Scal_1\Scal_2}\left\{\left[(M_i^{A_2A_1})^{\mathrm{T}} \otimes (M_j^{B_2B_1})^{\mathrm{T}}\right] \tr_f(\Upsilon_{2:0})\right\} \, .
\end{gather}

This expression allows us to derive restrictions on $\Upsilon_{2:0}$ to represent a causally ordered process. For example, the statement `Alice goes before Bob', means that no operation that Bob performs in his laboratory can influence the measurement statistics of Alice's experiment. In terms of the outcome probabilities, this means that $p(i|\Jcal_A) = \sum_j p(i,j|\Jcal_A,\Jcal_B)$ is independent of the choice of instrument $\Jcal_B$ for all possible choices of instruments in Alice's laboratory. It is straightforward to prove~\cite{chiribella_transforming_2008,chiribella_theoretical_2009} that this requirement implies 
\begin{align}\label{uno}\tr_f\left(\Upsilon_{2:0}\right) = \openone_{B_2} \otimes \Upsilon^{B_1A_2A_1}_{1:0} \, ,
\end{align}
where $\Upsilon^{B_1A_2A_1}_{1:0} \in \Lcal(\Hcal_{B_1} \otimes \Hcal_{A_2} \otimes \Hcal_{A_1})$ is the Choi operator of a multime process with a single open slot.  Employing the same reasoning again yields the final restriction \begin{align}\label{due}\tr_{B_1}(\Upsilon^{B_1A_2A_1}_{1:0}) = \openone_{A_2} \otimes \rho \, ,
\end{align} where $\rho$ is the initial system state, \textit{i.e.}, $\rho = \tr_{\Ecal} \left(\rho_{\Scal\Ecal}\right)$~\cite{chiribella_theoretical_2009,modi_operational_2012}. 

% \begin{comment}Consequently, a causally ordered two-step process, where Alice goes before Bob is given by a matrix $\Upsilon_{2:0} \in \Lcal(\Hcal_f) \otimes \Lcal(\Hcal_{B_2}) \otimes \Lcal(\Hcal_{B_1}) \otimes \Lcal(\Hcal_{A_2}) \otimes \Lcal(\Hcal_{A_1})$ that satisfies
% \begin{gather}
% \label{eqn::Tr}
% \Upsilon_{2:0} \geq 0, \qquad \tr_f(\Upsilon_{2:0}) = \openone_{B_2} \otimes \widetilde{\Upsilon}_{1:0} \qquad \text{and} \qquad \tr_{B_1}(\widetilde{\Upsilon}_{1:0}) = \openone_{A_2} \otimes \rho\, ,
% \end{gather}
% \end{comment}

The unitary circuit of~\eqref{eqn::CausChann} automatically leads to Choi operators that fulfil these requirements. Conversely, every  Choi operator satisfying Eqs. (\ref{uno}) and (\ref{due}) can  be represented by a unitary circuit like the one depicted in Fig.~\ref{fig::CausChann}~\cite{chiribella_theoretical_2009, pollock_complete_2015}. Analogously, a process that is causally ordered Bob $\rightarrow$ Alice would have to fulfil the same trace requirements, but with the roles of $B_2$ ($B_1$) and $A_2$ ($A_1$) interchanged. 

A causally ordered (Alice $\rightarrow$ Bob) process  $\Tcal_{2:0}$ can not only be meaningfully applied to independent CP operations $\Mcal_i^A$ and $\Mcal_j^B$, but also to temporally correlated operations $\Mcal^{AB}$ (see Fig.~\ref{fig::comb_on_comb}). For example, Alice could send the result of her measurement to Bob, and he conditions his choice of instrument on said outcome; or Alice could send Bob the ancilla that she used to implement her instrument, and he uses the same ancilla to perform his operation. The resulting \textit{temporally correlated} operation $\Mcal^{AB}$ is itself a causally ordered process (Alice $\rightarrow$ Bob), and as such, its Choi state $M^{B_2B_1A_2A_1}$ has to fulfil
\begin{gather}
\label{eqn::CorrOp}
M^{B_2B_1A_2A_1} \geq 0, \quad \tr_{B_2}(M^{B_2B_1A_2A_1}) = \openone_{B_1} \otimes M^{A_2A_1} \quad \text{and} \quad \tr_{A_2}(M^{A_2A_1}) = \openone_{A_1}\, .
\end{gather}
With this, we can equivalently restate the requirements on $\Upsilon_{2:0}$ in terms of a probability preservation condition; a matrix $\Upsilon_{2:0} \in \Lcal(\Hcal_f\otimes \Hcal_{B_2}\otimes \Hcal_{B_1}\otimes \Hcal_{A_2} \otimes \Hcal_{A_1})$ is a proper causally ordered two-step process tensor, if it maps every causally ordered $M^{B_2B_1A_2A_1}$ to a unit trace quantum state, \textit{i.e.}:
\begin{gather}
\label{eqn::TraceReq}
  \Upsilon_{2:0} \geq 0 \quad \text{and} \quad \tr[\tr_f(\Upsilon_{2:0})(M^{B_2B_1A_2A_1})^\mathrm{T}] = 1 \quad \forall \ M^{B_2B_1A_2A_1} \text{ that fulfil~\eqref{eqn::CorrOp}}\, . 
\end{gather}
We will see in the following section that a relaxation of condition~\eqref{eqn::TraceReq} directly leads to processes with indefinite causal order.

\begin{figure}
\centering
\includegraphics[scale=1.1]{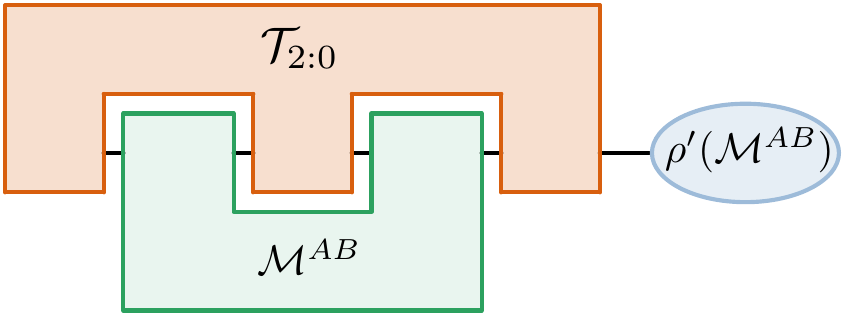}
\caption{\textit{Connection of a two-step process  with a three-step process.} A causally ordered (in our case: Alice $\rightarrow$ Bob)  process  $\Tcal_{2:0}$ can be connected  to another causally ordered process  $\Mcal^{AB}$, representing a sequence of two operations correlated by an internal memory.  The result of the connection, is a quantum state  $\rho'$. }
\label{fig::comb_on_comb}
\end{figure}

%%%%%%%%%%%%%%%%%%%%%%%%%%%%%%%%%%%%%%%%%%%
%%%%%%%%%%%%%%%%%%%%%%%%%%%%%%%%%%%%%%%%%%%
\section{Locally causal processes -- The Process Matrix} 
\label{sec::ProcMat}
%%%%%%%%%%%%%%%%%%%%%%%%%%%%%%%%%%%%%%%%%%%
%%%%%%%%%%%%%%%%%%%%%%%%%%%%%%%%%%%%%%%%%%%

A causally ordered process imposes a global temporal order (\textit{e.g.}, Alice goes before Bob), as well as a local temporal order, \textit{i.e.}, the outputs of Alice's (Bob's) instrument do not influence its respective inputs. Oreshkov \textit{et al.}~\cite{OreshkovETAL2012} introduced the \emph{process matrix} framework in order to describe quantum processes without reference to a global causal order. For two parties - Alice and Bob -- the framework is a generalization of the framwork described in Sec.~\ref{sec::Caus_ord} (the generalisation to more than two parties is straightforward~\cite{oreshkov_causal_2016}): Alice and Bob both have quantum instruments $\Jcal_A$ and $\Jcal_B$. They can choose their instrument at will; in contrast to the above case, their respective laboratories are separated, such that they cannot create correlated operations. All they can do is to individually receive a quantum state, apply an operation to it, and send out the result. Consequently, the Choi states\footnote{Our definition~\eqref{eqn::CJI} of the Choi state, as well as the ordering of Hilbert spaces differs slightly from~\cite{OreshkovETAL2012}. These particular choices have no influence on the results of the paper.} of operations that Alice and Bob can perform are of the form $M_i^{B_2B_1}\otimes M_j^{A_2A_1}$ only. 

For the choice of instruments $\Jcal_A$ and $\Jcal_{B}$ of Alice and Bob, the probability of outcomes $i$ and $j$ is then given by 
\begin{gather}\label{eqn::procmat}
 \mathbbm{P}\left(i,j \, |\, \Jcal_A, \Jcal_B \right) =\tr\left\{W^{B_2 B_1 A_2 A_1}\left[(M_i^{A_2 A_1})^\mathrm{T}\otimes (M_j^{B_2 B_1})^\mathrm{T}\right]\right\}\, ,
\end{gather}
where the positive-semidefinite matrix $W^{B_2 B_1A_2A_1} \in\Lcal\left(\Hcal^{B_2}\otimes\Hcal^{B_1}\otimes\Hcal^{A_2}\otimes\Hcal^{A_1}\right)$ is called the \textit{process matrix}. We denote the probabilities obtained from a process matrix by $\mathbb{P}$ to clearly distinguish them from probabilities obtained from causal process (possibly with conditioning -- see Sec.~\ref{sec::Neumark}). The only restriction on $W^{B_2 B_1A_2A_1}$ is that it has the correct normalisation for independent CPTP operations, \textit{i.e.}, 
\begin{gather}\label{eqn::Local_Caus}
 W^{B_2 B_1 A_2 A_1} \geq 0 \quad \text{and} \quad \tr\left\{W^{B_2B_1A_2A_1} \left[(M^{B_2B_1})^{\mathrm{T}} \otimes (M^{A_2A_1})^{\mathrm{T}}\right]\right\} = 1\, ,
\end{gather}
for all Choi state $M^{A_2A_1}$ and $M^{B_2B_1}$ of CPTP maps. Positivity of $W^{B_2B_1A_2A_1}$ ensures that probabilities are positive\footnote{More precisely, positivity of $W^{B_2B_1A_2A_1}$ is \textit{sufficient} for positive probabilities, but not necessary~\cite{barnum_influence-free_2005}. Demanding $W^{B_2B_1A_2A_1} \geq 0$ can be justified under the additional assumption that Alice and Bob can share a maximally entangled state on top of the temporal correlations that are given by $W^{B_2B_1A_2A_1}$~\cite{OreshkovETAL2012, AraujoETAL2015}.}, while the trace condition enforces local causality (\textit{i.e.}, in Alice's and Bob's separate laboratories).

It is important to note the similarity between the conditions~\eqref{eqn::Local_Caus} for process matrices and the conditions~\eqref{eqn::TraceReq} imposed on quantum combs. Process matrices yield unit probability on the affine span of the set of product CPTP maps $M^{B_2B_1} \otimes M^{A_2A_1}$; this set coincides with the set of no-signalling operations~\cite{PhysRevA.88.022318,GutoskiProp_2009}, which is strictly smaller than the set of temporally correlated causally ordered operations $M^{B_2B_1A_2A_1}$. Consequently, the set of admissible process matrices $W^{B_2B_1A_2A_1}$ is strictly larger than the set of temporally ordered processes $\tr_f(\Upsilon)$ (here, and in what follows, we will drop the subscripts of the process tensor and its Choi state); process matrices can describe temporal correlations that do not agree with a global causal order. We list the explicit restrictions that local causality imposes on process matrices in App.~\ref{sec::allowed}.

If $W^{B_2 B_1 A_2 A_1}$ corresponded to a causally ordered process, it would be of the form $W^{B_2 B_1 A_2 A_1} = \openone_{B_2} \otimes \Upsilon^{B_1A_2A_1}$ (Alice goes before Bob) or $W^{B_2 B_1 A_2 A_1} = \openone_{A_2} \otimes \, \Upsilon^{A_1B_2B_1}$ (Bob goes before Alice), where $\Upsilon^{B_1 A_2 A_1}$ and $\Upsilon^{A_1B_2B_1}$ are correctly causally ordered one-step process tensors. It is also conceivable that the causal structure is not known with certainty, or depends on an exterior statistical parameter (like, \textit{e.g.}, the flipping of a coin). The corresponding process matrix could then be expressed by a convex combination of causally ordered ones:
\begin{gather}
  \label{eqn::Convex_proc_mat}
  W^{B_2 B_1 A_2 A_1}_{\text{caus.sep.}} = q\left(\openone_{B_2} \otimes \Upsilon^{B_1A_2A_1}\right) + (1-q)\left(\openone_{A_2} \otimes \, \Upsilon^{A_1B_2B_1}\right)\, , \quad q\in \left[0,1\right]\,.
\end{gather}
Any process matrix that can be written in the form~\eqref{eqn::Convex_proc_mat} is called \emph{causally separable}. Processes with an underlying causal order, or processes where Alice and Bob cannot influence each other (\textit{e.g.}, Alice and Bob could share an entangled state but are spacelike separated) can be described by a process matrix that is of the form~\eqref{eqn::Convex_proc_mat}. The set of process matrices defined by~\eqref{eqn::Local_Caus} contains process matrices that are \emph{not} causally separable~\cite{OreshkovETAL2012}.

%%%%%%%%%%%%%%%%%%%%%%%%%%%%%%%%%%%%%%%%%%%
%%%%%%%%%%%%%%%%%%%%%%%%%%%%%%%%%%%%%%%%%%%

\section{Conditional simulation of causally indefinite processes}
\label{sec::Post_Selec}
%%%%%%%%%%%%%%%%%%%%%%%%%%%%%%%%%%%%%%%%%%%
%%%%%%%%%%%%%%%%%%%%%%%%%%%%%%%%%%%%%%%%%%%

By definition,  causally non-separable processes cannot be realised deterministically as  quantum circuits  or as probabilistic mixtures of quantum circuits.  However, it has been shown that every process matrix can be simulated probabilistically by a  quantum circuit with postselection~\cite{chiribella_transforming_2008, chiribella_theoretical_2009, 1367-2630-18-7-073037, silva_connecting_2017}. Explicitly, this means that the joint probability distributions $\mathbbm{P}(i,j|\Jcal_A,\Jcal_B)$, corresponding to a given process matrix $W^{B_2B_1A_2A_1}$, can be simulated by conditioning the probabilities $p(i,j,\mu|\Jcal_A,\Jcal_B)$, with respect to a successful outcome $\mu_{\rm succ}$, \textit{i.e.}, with respect to a successful outcome of Charlie's measurement.\footnote{It is important to note that conditioning on the other measurement  outcomes $\mu\neq \mu_{\rm succ}$ can also lead to proper process matrices (see, \textit{e.g.} Ex.~\ref{ex::Bruk}).}   

In this section, we  provide a   direct, constructive proof (in the spirit of the one provided in~\cite{chiribella_transforming_2008}) that every process matrix can be simulated by a circuit with postselection. With respect to existing results of this type, our construction is useful because it yields a higher probability of success. Subsequently, we will analyse how, and under what circumstances a \textit{valid} process matrix  emerges in general from a conditioned circuit. This analysis will then enable an investigation of the necessary resources to simulate a causally inseparable process, that will be carried out in the next section. 

%%%%%%%%%%%%%%%%%%%%%%%%%%%%%%%%%%%%%%%%%%%
\subsection{Conditional simulation of arbitrary causally indefinite processes}
\label{sec::Neumark}
%%%%%%%%%%%%%%%%%%%%%%%%%%%%%%%%%%%%%%%%%%%

A circuit like the one depicted in Fig.~\ref{fig::CausChann} yields a joint probability distribution $p(i,j|\Jcal_A,\Jcal_B)$ to obtain the outcomes $i$ and $j$ in Alice's and Bob's laboratory, respectively. Allowing for an additional (given by a fixed instrument $\Jcal_C$) measurement on the environment leads to a joint probability distribution $p(i,j,\mu|\Jcal_A,\Jcal_B)$. By \textit{conditioning} this joint probability distribution on a measurement outcome on the environment, \textit{i.e.}, by only recording data when the measurement on the environment yields the correct outcome, it is possible to simulate any process -- causally indefinite or not. We have the following theorem:
\begin{theorem}
\label{thm::Neumark} 
Any process matrix $W^{B_2B_1A_2A_1}$ can be simulated by a circuit with an initial state $\rho_{\Scal_1\Ecal} = \phi^+_{A_1}\otimes \phi_{B_1}^+ \otimes 
\Pi^{(0)}_{\Rcal}$, where and $\Pi_{\Rcal}^{(0)}  =  \ketbra{0}{0} $ is a pure state of the environment $\Rcal$. After the instruments $\Jcal_A$ and $\Jcal_B$ are applied, the systems and the environment evolve through the unitary $V$ that satisfies 
\begin{gather}
\tr_\Rcal[\Pi^{(0)}_{\Rcal} V] = {\sqrt{\lambda_{\text{max}}}}^{-1}\sqrt{ (W^{B_2B_1A_2A_1})^{\mathrm{T}}} \, ,
\end{gather} 
where $\lambda_{\text{max}}$ is the largest eigenvalue of $W^{B_2B_1A_2A_1}$.   
The desired process is simulated by measuring the environment in the computational basis and conditioning on the outcome $0$. The probability of success is 
\begin{align}
p_{\rm succ}   =   \frac{1}{ d_{A_1B_1} \,   \lambda_{\text{max}}} \, .
\end{align}
\end{theorem}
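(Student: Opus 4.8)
The plan is to first check that a unitary $V$ with the stated defining property actually exists, and then to push the initial state through the circuit and read off both the conditioned statistics and the success probability via a short Choi-calculus computation.

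\emph{Existence of $V$.} Since $W^{B_2B_1A_2A_1}\ge 0$, its transpose is positive and $\sqrt{(W^{B_2B_1A_2A_1})^{\mathrm{T}}}$ is well defined; moreover $\lambda_{\text{max}}$ is the operator norm of $W^{B_2B_1A_2A_1}$, hence of its transpose, so
\begin{gather}
K\coloneqq\tr_\Rcal[\Pi^{(0)}_\Rcal V]=\lambda_{\text{max}}^{-1/2}\sqrt{(W^{B_2B_1A_2A_1})^{\mathrm{T}}}
\end{gather}
is a positive contraction, $0\le K\le\openone$. I would then invoke the standard fact that any contraction is a corner of a unitary: with $\Rcal$ a single qubit one may take $V$ to be the self-adjoint involution whose block form in the computational basis of $\Rcal$ is $\left(\begin{smallmatrix}K & \sqrt{\openone-K^2}\\ \sqrt{\openone-K^2} & -K\end{smallmatrix}\right)$, which is unitary and satisfies $\tr_\Rcal[\Pi^{(0)}_\Rcal V]=K$. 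This is the one place where the rescaling by $\sqrt{\lambda_{\text{max}}}$ is essential, and I expect it to be the main obstacle; the rest is bookkeeping.

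\emph{Propagating the state.} Alice and Bob receive one half each of $\phi^+_{A_1}$ and $\phi^+_{B_1}$ and apply $\Mcal^A_i$ and $\Mcal^B_j$; by the definition~\eqref{eqn::CJI} of the Choi state, the registers $A_2A_1$ and $B_2B_1$ are then in the sub-normalised states $d_{A_1}^{-1}M_i^{A_2A_1}$ and $d_{B_1}^{-1}M_j^{B_2B_1}$, with $\Rcal$ still in $\Pi^{(0)}_\Rcal=\ketbra{0}{0}$. Applying $V$ and projecting $\Rcal$ onto $\ket{0}$ amounts, via the elementary identity $\bra{0}_\Rcal V(\sigma\otimes\ketbra{0}{0}_\Rcal)V^\dagger\ket{0}_\Rcal=K\sigma K^\dagger$, to the unnormalised output $K\bigl(d_{A_1B_1}^{-1}M_i^{A_2A_1}\otimes M_j^{B_2B_1}\bigr)K^\dagger$ (the leftover system register is irrelevant for a process matrix and is discarded). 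Taking the trace, using $K^\dagger K=\lambda_{\text{max}}^{-1}(W^{B_2B_1A_2A_1})^{\mathrm{T}}$ together with $\tr[X^{\mathrm{T}}Y]=\tr[XY^{\mathrm{T}}]$ and $(M_i\otimes M_j)^{\mathrm{T}}=M_i^{\mathrm{T}}\otimes M_j^{\mathrm{T}}$, and comparing with~\eqref{eqn::procmat}, yields
\begin{gather}
p(i,j,0\,|\,\Jcal_A,\Jcal_B)=\frac{1}{d_{A_1B_1}\lambda_{\text{max}}}\tr\!\left\{W^{B_2B_1A_2A_1}\bigl[(M_i^{A_2A_1})^{\mathrm{T}}\otimes(M_j^{B_2B_1})^{\mathrm{T}}\bigr]\right\}=\frac{\mathbb{P}(i,j\,|\,\Jcal_A,\Jcal_B)}{d_{A_1B_1}\lambda_{\text{max}}}\,.
\end{gather}

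\emph{Conclusion.} Summing over $i,j$ replaces $M_i^{A_2A_1},M_j^{B_2B_1}$ by the Choi operators of the CPTP maps $\sum_i\Mcal^A_i,\sum_j\Mcal^B_j$, so the normalisation~\eqref{eqn::Local_Caus} gives $\sum_{i,j}\mathbb{P}(i,j|\Jcal_A,\Jcal_B)=1$. Therefore $p_{\text{succ}}=\sum_{i,j}p(i,j,0|\Jcal_A,\Jcal_B)=(d_{A_1B_1}\lambda_{\text{max}})^{-1}$, which is independent of the chosen instruments, and $p(i,j\,|\,0,\Jcal_A,\Jcal_B)=p(i,j,0|\Jcal_A,\Jcal_B)/p_{\text{succ}}=\mathbb{P}(i,j|\Jcal_A,\Jcal_B)$ — conditioning on the outcome $0$ reproduces the statistics of $W^{B_2B_1A_2A_1}$ exactly.
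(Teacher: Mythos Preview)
Your proof is correct and follows essentially the same strategy as the paper: both arguments identify $K=\lambda_{\text{max}}^{-1/2}\sqrt{W^{\mathrm{T}}}$ as a contraction, dilate it to a unitary on an ancillary qubit (the paper invokes Neumark's theorem, you give the explicit Halmos block), and then read off the joint probability $p(i,j,0)=\mathbb{P}(i,j)/(d_{A_1B_1}\lambda_{\text{max}})$ from the Choi representation before summing over outcomes. The only cosmetic difference is that the paper first works with a generic $\alpha$ satisfying $\openone-\alpha W^{\mathrm{T}}\ge 0$ and optimises at the end, whereas you set $\alpha=\lambda_{\text{max}}^{-1}$ from the start; your explicit dilation also differs from the paper's Eq.~\eqref{eqn::Unit_Neumark} by a sign pattern, but both are unitary with the same $(0,0)$ corner.
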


Before we prove the theorem, we want to emphasise that it is constructive; for any given process matrix, it allows one to find an explicit circuit plus conditioning procedure that will yield the same statistics as the process matrix. This circuit is depicted in Fig.~\ref{fig::Neum_Circ}, where we delineate between different spaces: the initial system space includes one half of $\phi^+_{A_1}$ and $\phi^+_{B_1}$, i.e.,  $\Scal_1 = A_1B_1$; while, the other half, along with $\Rcal$, makes up $\Ecal= \Rcal \Ecal_1 \Ecal_2$.

\begin{proof}
Inserting~\eqref{eqn::CJI} into~\eqref{eqn::procmat} we obtain $\mathbb{P}\left(i,j|\Jcal_A,\Jcal_B \right) = d_{A_1B_1} \tr\left\{W^{\mathrm{T}} \, \left(\Mcal^A_i \otimes \Mcal^{B}_j\right)\left[\phi^+_{A_1} \otimes \phi^+_{B_1}\right] \right\}$, where we have set $W \coloneqq W^{B_2B_1A_2A_1}$. Since $W^{\mathrm{T}}$ is positive we can think of it as an element of a POVM; by Neumark's theorem~\cite{neumark1940self, peres_quantum_1995, paulsen_completely_2003}, there is a unitary $V$, and a projector $\Pi_\Rcal^{(0)}\coloneqq\ketbra{0}{0}$ such that $\sqrt{\alpha W^{\mathrm{T}}} = \tr_{\Rcal}[\Pi^{(0)}_\Rcal V]$, where $\alpha>0$ is chosen such that $\openone - \alpha W^{\mathrm{T}}\geq 0$. Putting it together we get
\begin{gather}
\label{eqn::Circuit}
\mathbb{P}\left(i,j|\Jcal_A,\Jcal_B\right) = \frac{d_{A_1B_1}}{\alpha} \tr\left\{\Pi_\Rcal^{(0)} \, V \, \left[\left(\Mcal^A_i \otimes \Mcal^{B}_j\right)\left[\phi^+_{A_1} \otimes \phi^+_{B_1}\right] \otimes \Pi_\Rcal^{(0)}\right]\, V^{\dagger}\right\}
\end{gather}
 The right-hand side of~\eqref{eqn::Circuit} describes a circuit with a measurement on $\Rcal$ in the computational basis that yields $0$. The probability to measure $0$ on the environment is
\begin{gather}
\label{eqn::probEx}
  p(0) = \sum_{i,j} p(i,j,0) = \frac{\alpha}{d_{A_1B_1}}\sum_{i,j}\mathbb{P}(i,j) = \frac{\alpha}{d_{A_1B_1}}\, ,
\end{gather}
where $p(i,j,0)$ is the probability to measure $i,j$ and $0$. The maximum success probability $p_{\rm succ}$ is hence obtained for $\alpha = \lambda_{\rm max}^{-1}$.
\end{proof}

%%%%%%%%%%%%%%%%%%%%%%%%%%%%%%%%%%%%%%%%%%%
\begin{figure}
\centering

\subfigure[]{
  \includegraphics[scale=0.8]{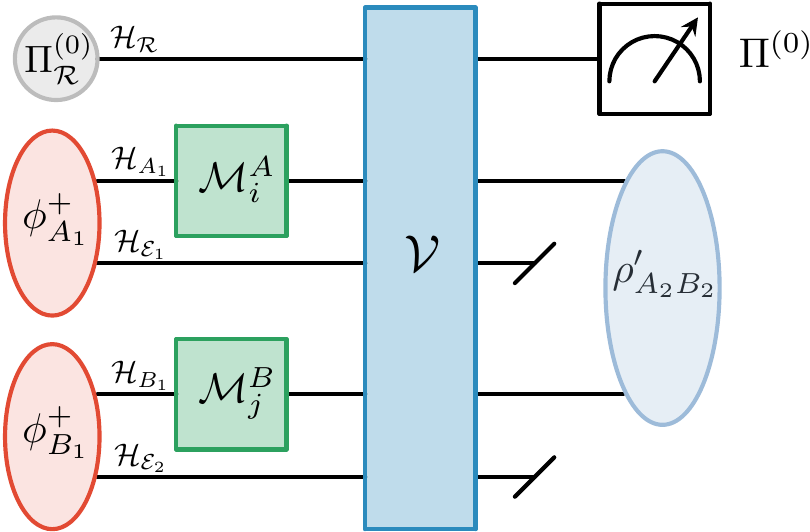}
  \label{fig::Neum_Circ}
}
\hspace{0.9cm}
\subfigure[]{
 \includegraphics[scale=0.8]{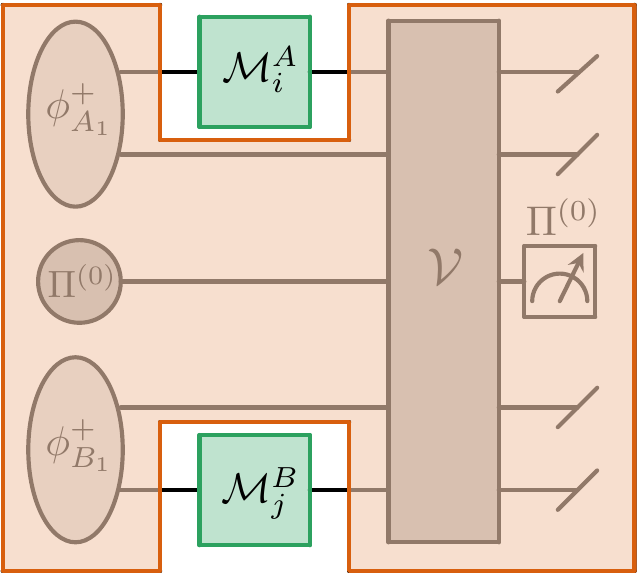}
}
 \caption{\textbf{(a)}\textit{Circuit with measurement that yields a given causally indefinite process}. The initial state and the unitary $\Vcal$ are given by Thm.~\ref{thm::Neumark}. We have $\mathbbm{P}(i,j) = \tr(\rho_{\Scal_2}') = \tr\{\Tcal^{(0)} [\Mcal_i^A,\Mcal_j^B]\}$. \textbf{(b)} \textit{Resulting process matrix.} The process matrix obtained from the circuit with measurement (orange box) yields the correct probabilities $\mathbb{P}(i,j|\Jcal_A,\Jcal_B)$.}
\end{figure}

%%%%%%%%%%%%%%%%%%%%%%%%%%%%%%%%%%%%%%%%%%%

%Eq.~\eqref{eqn::Circuit} is the dilation of a single step process tensor -- known as the superchannel~\cite{modi_operational_2012, PhysRevLett.114.090402} -- conditioned on the measurement on the ancillary system. The superchannel was introduced to tackle cases of open system dynamics where initial $\Scal\Ecal$ correlations play a role in the future dynamics.
It is convenient to rewrite Eq.~\eqref{eqn::Circuit} as
\begin{gather}
\label{eqn::CircPostselection}
\mathbb{P}\left(i,j | \Jcal_A,\Jcal_B\right) = \frac{1}{p(0)} \tr\left(\Ncal^{(0)} \circ \Vcal \circ \Mcal_j^B \circ \Mcal_i^A \left[\rho_{\Scal_1\Ecal}\right] \right) \coloneqq \tr\left(\Tcal^{(0)}\left[\Mcal_i^{A},\Mcal_j^B\right]\right)\, ,
\end{gather}
where $\Ncal^{(0)}[\rho_\Rcal] \coloneqq \Pi^{(0)}_\Rcal\rho_\Rcal \, \Pi^{(0)}_\Rcal$ is the projection on $\Rcal$, which defines the \textit{conditioned process tensor} $\Tcal^{(0)}$. 
%Here $\Hcal_{\Scal_1} = \Hcal_{A_1} \otimes \Hcal_{B_1}$, \textit{i.e.}, the Hilbert space of Alice's output is distinct from the Hilbert space of Bob's input. The conditioned process tensor can be reconstructed experimentally by following the standard procedure for process tensor reconstruction~\cite{modi_operational_2012, PhysRevLett.114.090402, pollock_complete_2015, milz_reconstructing_2016}, but only recording data when the measurement in the computational basis on the ancilla yields outcome $0$. 
Comparison of~\eqref{eqn::CircPostselection} and~\eqref{eqn::ActionChoiProc} shows that  $W^{B_2B_1A_2A_1} = \tr_f\Upsilon^{(0)}\,$, where $\Upsilon^{(0)}$ is the Choi state of $\Tcal^{(0)}$; every process matrix can be simulated by a conditioned process tensor.
For the above scenario, we have 
\begin{gather}
  \mathbb{P}(i,j|\Jcal_A,\Jcal_B) = \frac{1}{p(0)}p(i,j,0|\Jcal_A,\Jcal_B).
\end{gather}

The simplest implementation of Theorem \ref{thm::Neumark} is the implementation where the ancilla is a qubit. A possible unitary $V$ which implements the desired process matrix $W$, as one of two possible process matrices $\{W, \ W_\sharp \}$, is given by 
\begin{gather}
\label{eqn::Unit_Neumark}
V = \sqrt{X}\otimes \ketbra{0}{0} - \sqrt{X_\sharp} \otimes \ketbra{0}{1} + \sqrt{X_\sharp} \otimes \ketbra{1}{0} + \sqrt{X}\otimes \ketbra{1}{1} \, ,
\end{gather}
where $X  \coloneqq   W^T/\lambda_{\rm max}$, and $X_\sharp \coloneqq \openone - X$. This choice of $V$ is indeed well-defined and unitary, as $[\sqrt{X},\sqrt{X_\sharp}] = 0$, and $X_\sharp$ is a positive operator. Conditioning on the outcome $0$ yields $W$, whereas conditioning on $1$ yields the process matrix $W_\sharp$. The above construction is not restricted to a two-party scenario, but can straightforwardly be generalised to process matrices that apply to an arbitrary number of parties.

We conclude this section by illustrating Thm.~\ref{thm::Neumark} for an explicit example.

%%%%%%%%%%%%%%%%%%%%%%%%%%%%%%%%%%%%%%%%%%%
\begin{example}
\label{ex::Bruk}
 \normalfont In~\cite{OreshkovETAL2012} the following process matrix was introduced as an example for a causally indefinite process that can violate a causal inequality:
\begin{gather}
\label{eqn::Brukner_ProcMat}
W^{B_2B_1A_2A_1} = \frac{1}{4}\left(\openone^{B_2B_1A_2A_1} + \frac{1}{\sqrt{2}}\left( \sigma_z^{B_1}\otimes\sigma_z^{A_2}  + \sigma_z^{B_2} \otimes \sigma_x^{B_1} \otimes \sigma_z^{A_1} \right)\right)\, ,
\end{gather}
where $\sigma_a^X$ are Pauli matrices on the Hilbert space $\Hcal_X = \mathbb{C}^2$, and we have omitted the respective identity matrices. For this process matrix, we can choose $\alpha$ in~\eqref{eqn::Circuit} to be equal to $2$. Consequently:
\begin{align}
\notag
\sqrt{X} &= \frac{1}{2}\left(\openone^{B_2B_1A_2A_1} + \frac{1}{\sqrt{2}}\left(\sigma_z^{B_2} \otimes \sigma_x^{B_1} \otimes \sigma_z^{A_1} + \sigma_z^{B_1}\otimes \sigma_z^{A_1} \right)\right)\, , \\
\label{eqn::SqrtProc}
\sqrt{X_\sharp} &= \frac{1}{2}\left(\openone^{B_2B_1A_2A_1} - \frac{1}{\sqrt{2}}\left(\sigma_z^{B_2} \otimes \sigma_x^{B_1} \otimes \sigma_z^{A_1} + \sigma_z^{B_1}\otimes \sigma_z^{A_1}\right)\right)\, .
\end{align}
The corresponding unitary $V$ can be constructed using~\eqref{eqn::Unit_Neumark}. The probability $p(0)$ of success for the outcome $0$ on the environment for this choice of unitary is equal to $1/2$. The process matrix $W_\sharp$ that one would obtain by conditioning on the outcome $1$ is given by $W_\sharp = p(1)  X_\sharp = \frac{1}{2} \openone - W$. Both $W$ as well as $W_\sharp$ are causally non-separable; conditioning his data recording on either of the outcomes $0$ or $1$ on the environment, Charlie could not ascribe a causal order to Alice and Bob's actions based on the joint probability distributions he obtains. As expected, the average of $W$ and $W_\sharp$ is causally ordered, though.

By construction, the initial state $\rho_{\Scal\Ecal}$ exhibits genuine tripartite entanglement, \textit{i.e.}, it is entangled across all possible bipartitions $\left\{A_1:B_1\Ecal, B_1:A_1\Ecal, \Ecal:A_1B_1\right\}$. On top of that, it is easy to check, that the constructed $V$ is nonlocal, \textit{i.e.}, it cannot be written as $V^{A_2} \otimes Z^{B_2\mathcal{E}}, V^{B_2} \otimes Z^{A_2\mathcal{E}}$ or $V^{B_2A_2} \otimes Z^{\mathcal{E}}$.  For this example, it is even tripartite entangling\footnote{In general, nonlocality of a unitary operation is necessary for it to be entangling, but not sufficient.}. We will see in Sec.~\ref{sec::Resources} that both of these properties -- initial tripartite entanglement and a nonlocal unitary -- are necessary requirements for the simulation of causally non-separable process matrices. %Notably, the derivation of the above simulation scheme only relied on the positivity of $W$.  
% We will see below that a circuit with measurement that yields a \textit{proper} process matrix has to be highly fine-tuned.
\end{example}
%%%%%%%%%%%%%%%%%%%%%%%%%%%%%%%%%%%%%%%%%%%

%%%%%%%%%%%%%%%%%%%%%%%%%%%%%%%%%%%%%%%%%%%
\subsection{Conditional circuits and valid acausal processes}
\label{sec::Proper}
%%%%%%%%%%%%%%%%%%%%%%%%%%%%%%%%%%%%%%%%%%%

In the previous section, we have shown that every causally indefinite process  can be obtained via a circuit with measurement. On the other hand, not every circuit with measurement yields a proper process. The following theorem fixes the set of circuits that lead to a proper process matrix when conditioned on an outcome $\mu$ on the environment:  
\begin{theorem}
\label{thm::PropProc}
A circuit with measurement on the environment yielding outcome $\mu$ leads to a proper process matrix iff the success probability $p(\mu)$ does not depend on the choices of instruments $\Jcal_A$ and $\Jcal_B$.
\label{thm::Valid1}
\end{theorem}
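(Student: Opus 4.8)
The strategy is to pass to the Choi picture and compare the full joint distribution $p(i,j,\mu\,|\,\Jcal_A,\Jcal_B)$ produced by the circuit with its conditioned counterpart $\mathbb{P}(i,j\,|\,\Jcal_A,\Jcal_B)=p(i,j,\mu\,|\,\Jcal_A,\Jcal_B)/p(\mu\,|\,\Jcal_A,\Jcal_B)$. Proceeding exactly as in Eq.~\eqref{eqn::CircPostselection} but \emph{without} renormalising by $p(\mu)$, the map $\widetilde{\Tcal}^{(\mu)}[\Mcal_i^A,\Mcal_j^B]\coloneqq\Ncal^{(\mu)}\circ\Vcal\circ\Mcal_j^B\circ\Mcal_i^A[\rho_{\Scal_1\Ecal}]$ is a \emph{fixed} completely positive supermap --- a composition of CP operations followed by the projection $\Ncal^{(\mu)}$ --- so it has a positive Choi operator $\widetilde\Upsilon^{(\mu)}$. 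Writing $\widetilde W^{(\mu)}\coloneqq\tr_f\widetilde\Upsilon^{(\mu)}\geq 0$ and invoking Eq.~\eqref{eqn::ActionChoiProc} for $\widetilde\Upsilon^{(\mu)}$ (and tracing out $\Hcal_f$) gives $p(i,j,\mu\,|\,\Jcal_A,\Jcal_B)=\tr\{\widetilde W^{(\mu)}[(M_i^{A_2A_1})^{\mathrm T}\otimes(M_j^{B_2B_1})^{\mathrm T}]\}$; summing over $i$ and $j$ yields $p(\mu\,|\,\Jcal_A,\Jcal_B)=\tr\{\widetilde W^{(\mu)}[(M^{A_2A_1})^{\mathrm T}\otimes(M^{B_2B_1})^{\mathrm T}]\}$, which manifestly depends only on the total CPTP maps $\Mcal^X=\sum_k\Mcal_k^X$ of the two instruments.

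With this in hand the ``if'' direction is immediate. If $p(\mu\,|\,\Jcal_A,\Jcal_B)$ equals a (nonzero) constant $p_\mu$, I set $W\coloneqq\widetilde W^{(\mu)}/p_\mu$: it is positive; it reproduces the conditioned statistics since $\tr\{W[(M_i^{A_2A_1})^{\mathrm T}\otimes(M_j^{B_2B_1})^{\mathrm T}]\}=p(i,j,\mu)/p_\mu=\mathbb{P}(i,j\,|\,\Jcal_A,\Jcal_B)$; and evaluated on any pair of product CPTP maps it returns $p(\mu)/p_\mu=1$, which is precisely the normalisation in Eq.~\eqref{eqn::Local_Caus}. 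Hence $W$ is a proper process matrix.

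For the ``only if'' direction I assume the conditioned statistics are those of some proper process matrix $W$, so that $\tr\{\widetilde W^{(\mu)}\,X\}=p(\mu\,|\,\Jcal_A,\Jcal_B)\,\tr\{W\,X\}$ for every $X=(M_i^{A_2A_1})^{\mathrm T}\otimes(M_j^{B_2B_1})^{\mathrm T}$. The idea is to freeze $p(\mu)$ while still letting the summand Choi operators sweep out the entire operator space: take \emph{both} total maps to be $\rho\mapsto\tr(\rho)\,\tau$ with $\tau$ a full-rank state, whose Choi operator $\tau\otimes\openone$ is full rank, so that every CP summand $\Mcal_i^A$ for which $\Mcal^A-\Mcal_i^A$ is still CP is admissible. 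The Choi operators of such summands are exactly the positive $N$ with $0\le N\le\tau\otimes\openone$; these contain a ball around $\tfrac{1}{2}\tau\otimes\openone$ and therefore span $\Lcal(\Hcal_{A_2}\otimes\Hcal_{A_1})$, and likewise on Bob's side. Consequently the operators $X$ span the full space on which $\widetilde W^{(\mu)}$ and $W$ live, and the displayed identity forces $\widetilde W^{(\mu)}=c\,W$, where $c=p(\mu\,|\,\Mcal^A,\Mcal^B)$ is the single number attained for this pair of maps. Substituting back, $p(\mu\,|\,\Jcal_A,\Jcal_B)=c\,\tr\{W[(M^{A_2A_1})^{\mathrm T}\otimes(M^{B_2B_1})^{\mathrm T}]\}=c$ for \emph{every} choice of instruments by Eq.~\eqref{eqn::Local_Caus}; i.e.\ $p(\mu)$ is instrument-independent, as claimed.

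The Choi bookkeeping and the ``if'' direction are routine substitutions into Eqs.~\eqref{eqn::CircPostselection}, \eqref{eqn::ActionChoiProc} and \eqref{eqn::Local_Caus}. The step I expect to require the most care is the spanning argument in the ``only if'' direction: exhibiting a single pair of CPTP maps whose allowed CP decompositions have Choi operators that linearly span the full operator space, which is exactly what promotes the bilinear identity $\tr\{\widetilde W^{(\mu)}X\}=p(\mu)\tr\{WX\}$ to the operator identity $\widetilde W^{(\mu)}=c\,W$ needed to conclude.
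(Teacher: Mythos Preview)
Your proof is correct and follows essentially the same route as the paper: both set up the unnormalised Choi operator $\widetilde W^{(\mu)}=\tr_f\widetilde\Upsilon^{(\mu)}$ via Eq.~\eqref{eqn::circUnnormChoi}, and both dispatch the ``if'' direction by dividing by the constant $p(\mu)$ and checking Eq.~\eqref{eqn::Local_Caus}.

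The only substantive difference is in the ``only if'' direction. The paper argues operationally: it asserts that the process matrix Alice and Bob \emph{reconstruct} from their conditioned data is $\tr_f\widetilde\Upsilon^{(\mu)}$ divided by the overall success frequency $\widetilde p(\mu)$, and then reads off instrument-independence from the normalisation of $W^\mu$. This tacitly presupposes that $\widetilde W^{(\mu)}$ is a scalar multiple of the putative $W$. Your spanning argument fills exactly this gap: by fixing both total CPTP maps to the full-rank trash-and-prepare channel $\rho\mapsto\tr(\rho)\tau$, you freeze $p(\mu)$ to a single value $c$ while letting the admissible CP summands sweep out a spanning set of Choi operators, thereby \emph{deriving} the operator identity $\widetilde W^{(\mu)}=c\,W$ rather than assuming it. This is a genuine strengthening of the paper's argument at no extra cost.
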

In a slightly different context, this was also discussed in~\cite{silva_connecting_2017}, where the authors pointed out that proper process matrices can be simulated by two-time states that have the property that the probability rule becomes linear, \textit{i.e.}, probabilities do not depend on the choice of instruments. Here, we provide a direct proof of this statement. Thm.~\ref{thm::Valid1} shows that the conditioned simulation of a process matrix is well-defined; since the probability for success does not depend on Alice's and Bob's instruments, the reconstructed process matrix is independent of how Alice and Bob choose to run their respective experiments. Consequently, conditioning is a clear-cut experimental prescription. However, this also means that the respective circuits are highly fine-tuned; an arbitrary circuit with measurement would almost always lead to success probabilities that depend on the choices of instruments, or, put another way, would lead to a reconstructed process matrix that violates local causality.

\begin{proof}
The probability to measure $i,j$ and $\mu$ in a run of general a two-step circuit with measurement on the environment is given by: 
\begin{gather}
\label{eqn::circUnnormChoi}
  p(i,j,\mu|\Jcal_A,\Jcal_B) = \tr\left\{\tr_f(\widetilde \Upsilon^{(\mu)}) \left[(M_i^{A_2A_1})^{\mathrm{T}}\otimes (M_j^{B_2B_1})^{\mathrm{T}}\right]\right\}\, ,
\end{gather}
where $\widetilde \Upsilon^{(\mu)} \coloneqq p(\mu) \Upsilon^{(\mu)}$ is the Choi state of an unnormalised conditioned process tensor, which does not depend on the respective instruments in Alice's and Bob's laboratories. 

If the success probability $p(\mu) = \sum_{i,j} p(i,j,\mu|\Jcal_A,\Jcal_B)$ to measure $\mu$ is independent of $\Jcal_A$ and $\Jcal_B$, we can define the positive matrix $W^{\mu} = \tr_f(\widetilde \Upsilon^{(\mu)})$. This matrix is also independent of $\Jcal_A$ and $\Jcal_B$ and it is straightforward to see that it is correctly normalised on products of CPTP maps, \textit{i.e.},
\begin{gather}
  \tr\{W^{\mu} [(M^{A_2A_1})^\mathrm{T} \otimes (M^{B_2B_1})^\mathrm{T}]\} = \frac{1}{p(\mu)} \sum_{i,j} p(i,j,\mu|\Jcal_A,\Jcal_B) = 1\, .
\end{gather}

To prove the converse statement, let the circuit with measurement be such that it yields a proper process matrix $W^{\mu}$. This means that Alice and Bob -- choosing their respective instruments independently -- can reconstruct a valid process matrix by only recording data when a measurement on the environment yields the outcome $\mu$. Consequently, the process matrix they reconstruct is the Choi state defined in~\eqref{eqn::circUnnormChoi}, normalised by the \textit{overall} probability $\widetilde p(\mu)$ to measure $\mu$, \textit{i.e.}, $W^{\mu} = \frac{1}{\widetilde{p}(\mu)} \tr_f(\widetilde \Upsilon^{(\mu)})$. With this, we can show that the probability to measure $\mu$ for given instruments $\Jcal_A$ and $\Jcal_B$, given in \eqref{eqn::CircPostselection}, is independent of the choice of instruments:
\begin{gather}
  p(\mu|\Jcal_A,\Jcal_B) = \sum_{i,j} p(i,j,\mu|\Jcal_A,\Jcal_B) = \widetilde p(\mu)\tr\left\{W^{\mu}\left[(M^{A_2A_1})^{\mathrm{T}} \otimes (M^{B_2B_2})^{\mathrm{T}})\right]\right\} = \widetilde p(\mu)\, ,
\end{gather}
where we have used that $W^{\mu}$ is a proper process matrix, \textit{i.e.}, it satisfies~\eqref{eqn::Local_Caus}.
\end{proof}
We have already seen an example of Thm.~\ref{thm::Valid1} in Sec.~\ref{sec::Neumark}; the probability to measure $0$ on the environment in Ex.~\ref{ex::Bruk} was independent of the respective instruments and equal to $p(0) = 1/2$. While the above proof only applies to the case of two parties (Alice and Bob), it can straightforwardly be extended to the case of multiple parties. 

%%%%%%%%%%%%%%%%%%%%%%%%%%%%%%%%%%%%%%%%%%%
\begin{figure}
  \centering
  \includegraphics[scale=1]{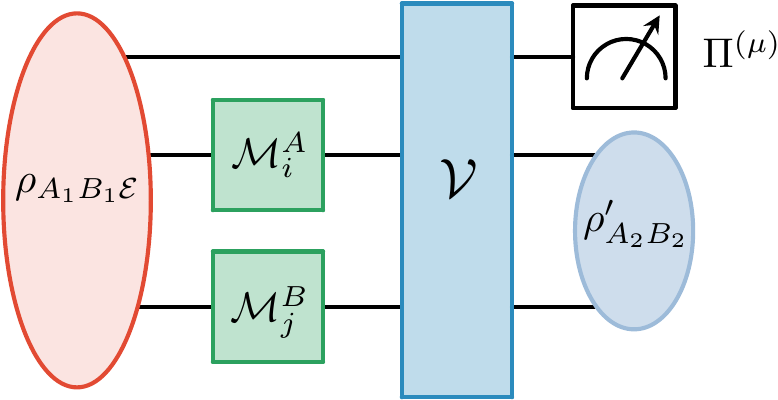}
  \caption{\textit{General parallel circuit with measurement}. Here again, we have a conditional superchannel~\cite{modi_operational_2012, PhysRevLett.114.090402}: Alice and Bob perform the respective independent operations $\Mcal_i^{A}$ and $\Mcal_j^{B}$ on a shared state. The result is subjected to a unitary time evolution and then conditioned on the outcome $\mu$ on the environment.}
  \label{fig::Post_Sel}
\end{figure}
%%%%%%%%%%%%%%%%%%%%%%%%%%%%%%%%%%%%%%%%%%%

If a circuit with measurement satisfies Thm.~\ref{thm::Valid1}, we can explicitly compute the resulting process matrix as well as the probability of successful conditioning. A circuit like the one depicted in Fig.~\ref{fig::Post_Sel} is defined by a triple $\{\rho_{\Scal_1\Ecal},V,\Pi^{(\mu)}\}$ of initial system-environment state, intermediate system-environment unitary and a projection on the environment. We have the following corollary:
\begin{corollary}
\label{Cor::ProcMat}
For a triple $\{\rho_{\Scal_1\Ecal},V,\Pi^{(\mu)}\}$ that yields a proper process matrix $W^{\mu}$, we have 
\begin{align}
\label{eqn::Valid_Proc_Mat}
   (W^{\mu})^{\mathrm{T}} &= \frac{1}{p(\mu)}\tr_\mathcal{E}\left[\left[\openone_{\Scal_1}\otimes V^{\dagger} \,\left( \openone_{\Scal_2} \otimes\Pi^{(\mu)} \right) \, V \, \right] \left(\openone_{\Scal_2} \otimes \rho_{\Scal_1\Ecal}^{\text{T}_{\Scal_1}} \right)\right]\\
   \label{eqn::ProbValid}
  \text{and} \quad p(\mu) &= \frac{1}{d_{A_2B_2}} \tr_{\Scal_2\Ecal}\left\{[V^{\dagger}(\openone_{\Scal_2} \otimes \Pi^{(\mu)}) \, V](\openone_{\Scal_2} \otimes \rho_\Ecal) \right\}\, ,
\end{align}
where $\rho_{\Ecal} = \tr_{\Scal_1}(\rho_{\Scal_1\Ecal})$ and $\boldsymbol{\cdot}^{\text{T}_{\Scal_1}}$ is the partial transpose with respect to $\Scal_1$.
\end{corollary}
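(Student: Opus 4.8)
The plan is to start from the joint probability generated by the circuit $\{\rho_{\Scal_1\Ecal},V,\Pi^{(\mu)}\}$ and shuttle all the dependence on Alice's and Bob's instruments onto their Choi matrices, exactly in the spirit of the proof of Thm.~\ref{thm::Valid1}. The circuit of Fig.~\ref{fig::Post_Sel} produces $p(i,j,\mu|\Jcal_A,\Jcal_B) = \tr\{(\openone_{\Scal_2}\otimes\Pi^{(\mu)})\,V\,[(\Mcal_i^A\otimes\Mcal_j^B)[\rho_{\Scal_1\Ecal}]]\,V^{\dagger}\}$ (cf.~\eqref{eqn::CircPostselection}), together with $\mathbb{P}(i,j|\Jcal_A,\Jcal_B)=p(i,j,\mu)/p(\mu)$. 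The first step is to rewrite the action of the two local CP maps through their Choi representation~\eqref{eqn::ActionChoi}, applied on the respective local input spaces: $(\Mcal_i^A\otimes\Mcal_j^B\otimes\Ical_\Ecal)[\rho_{\Scal_1\Ecal}] = \tr_{\Scal_1}[(\openone_{\Scal_2}\otimes\rho_{\Scal_1\Ecal}^{\text{T}_{\Scal_1}})(M_i^{A_2A_1}\otimes M_j^{B_2B_1}\otimes\openone_\Ecal)]$, which I would then read as an operator on $\Hcal_{\Scal_1}\otimes\Hcal_{\Scal_2}\otimes\Hcal_\Ecal$ in which the Choi isomorphism has ``revived'' the input space $\Scal_1 = A_1B_1$ that the physical circuit has already consumed.

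Next I would insert this into $p(i,j,\mu)$ and exploit that $V$ and $\Pi^{(\mu)}$ act trivially on $\Scal_1$, while the Choi matrices act trivially on $\Ecal$: this lets one pull the $\Scal_1$-trace out in front and, by cyclicity of the full trace, group everything except $M_i^{A_2A_1}\otimes M_j^{B_2B_1}$ into a single operator
\begin{gather}
G\coloneqq\tr_\Ecal\!\left\{\left[\openone_{\Scal_1}\otimes V^{\dagger}\left(\openone_{\Scal_2}\otimes\Pi^{(\mu)}\right)V\right]\left(\openone_{\Scal_2}\otimes\rho_{\Scal_1\Ecal}^{\text{T}_{\Scal_1}}\right)\right\}
\end{gather}
on $\Hcal_{\Scal_1}\otimes\Hcal_{\Scal_2}$, so that $p(i,j,\mu)=\tr\{(M_i^{A_2A_1}\otimes M_j^{B_2B_1})\,G\}$. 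At this point one uses that $\tr_\Ecal$ is insensitive to the order of two operators that act nontrivially only on complementary tensor factors besides $\Ecal$ (here $\Scal_1$ versus $\Scal_2$), which is what matches the precise operator ordering displayed in~\eqref{eqn::Valid_Proc_Mat}. Finally, applying $\tr(AB)=\tr(A^{\mathrm{T}}B^{\mathrm{T}})$ turns this into $\mathbb{P}(i,j)=\tr\{(G^{\mathrm{T}}/p(\mu))\,[(M_i^{A_2A_1})^{\mathrm{T}}\otimes(M_j^{B_2B_1})^{\mathrm{T}}]\}$, and comparison with~\eqref{eqn::procmat} identifies $(W^{\mu})^{\mathrm{T}}=G/p(\mu)$, which is precisely~\eqref{eqn::Valid_Proc_Mat}.

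For~\eqref{eqn::ProbValid} I would sum the previous expression over $i$ and $j$, obtaining $p(\mu)=\tr\{[(\sum_i M_i^{A_2A_1})\otimes(\sum_j M_j^{B_2B_1})]\,G\}$. By assumption the triple yields a proper process matrix, so by Thm.~\ref{thm::Valid1} the value of $p(\mu)$ does not depend on the instruments; one is therefore free to evaluate it on the simplest CPTP maps, the replacement channels $\Mcal^X(\cdot)=\tr[\cdot]\,\openone_{X_2}/d_{X_2}$, whose Choi matrices are $\openone_{X_2X_1}/d_{X_2}$. This collapses the double sum to $p(\mu)=\tr_{\Scal_1\Scal_2}[G]/d_{A_2B_2}$; carrying out the remaining $\Scal_1$-trace inside $G$ replaces $\rho_{\Scal_1\Ecal}^{\text{T}_{\Scal_1}}$ by $\tr_{\Scal_1}\rho_{\Scal_1\Ecal}=\rho_\Ecal$ (the partial trace being transpose-invariant), and leaves exactly~\eqref{eqn::ProbValid}.

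The only genuine effort in this proof is the bookkeeping: keeping straight which operators are supported on $\Scal_1$, $\Scal_2$, and $\Ecal$; remembering that the Choi isomorphism reintroduces the already-consumed input space $\Scal_1$; and handling the two transposes, namely the partial transpose $\text{T}_{\Scal_1}$ inherited from~\eqref{eqn::ActionChoi} and the global transpose relating $G$ to $W^{\mu}$. The one conceptual ingredient beyond algebra is the appeal to Thm.~\ref{thm::Valid1} to compute $p(\mu)$ on a convenient instrument; without it one would have to propagate the general $\sum_i M_i^{A_2A_1}$ (which only satisfies $\tr_{A_2}\sum_i M_i^{A_2A_1}=\openone_{A_1}$) through the trace, which is more cumbersome but yields the same result.
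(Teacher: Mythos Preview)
Your proposal is correct and follows essentially the same approach as the paper: both start from the circuit probability $p(i,j,\mu)$, use the Choi representation~\eqref{eqn::ActionChoi} to introduce the partial transpose $\text{T}_{\Scal_1}$ and isolate $M_i^{A_2A_1}\otimes M_j^{B_2B_1}$, read off $(W^{\mu})^{\mathrm{T}}$ from the remaining operator, and then obtain~\eqref{eqn::ProbValid} by invoking Thm.~\ref{thm::Valid1} and evaluating the sum on the depolarizing instrument $\sum_i M_i^{X_2X_1}=\openone_{X_2X_1}/d_{X_2}$. Your write-up is more explicit about the tensor-factor bookkeeping and the two distinct transposes, but the logic is identical.
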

The above case of orthogonal projections on the environment is very general;  it includes -- by Neumark's theorem -- all possible POVMs. A more general construction, where $\Mcal^A$ and $\Mcal^B$ act sequentially, follows in a similar manner, and is discussed in Sec.~\ref{sec::Serial}.
\begin{proof}
Using~\eqref{eqn::CircPostselection}, we can write the probability to obtain outcomes $i, j$ and $\mu$ in a single run of the experiment as
\begin{align}
\notag
p(i,j,\mu|\Jcal_A,\Jcal_B) &= \tr\left[(\openone_{\Scal_2} \otimes \Pi^{(\mu)}) V \, (\Mcal_i^{A} \otimes \Mcal_j^B\otimes \Ical_\Ecal)[\rho_{\Scal_1\Ecal}]\, V^{\dagger} \right] \\
\label{eqn::ResultProc}
&= \tr\left\{\tr_\mathcal{E}\left[\left(\openone_{\Scal_1}\otimes V^{\dagger} \,\left( \openone_{\Scal_2} \otimes\Pi^{(\mu)} \right) \, V \, \right) \left(\openone_{\Scal_2} \otimes \rho_{\Scal_1\Ecal}^{\text{T}_{\Scal_1}} \right)\right](M^{A_2A_1}_i \otimes M^{B_2B_1}_j)\right\}\,.
\end{align}
The resulting process matrix $W^{\mu}$ can be directly read off of~\eqref{eqn::ResultProc}. By Thm.~\ref{thm::PropProc}, the probability $p(\mu)$ to measure $\mu$ on the environment is independent of the instrument. Using $\sum_iM_i^{A_2A_1} = \frac{1}{d_{A_2}} \openone_{A_1A_2}$, $\sum_iM_i^{B_2B_1} = \frac{1}{d_{B_2}} \openone_{B_1B_2}$, and $p(\mu) = \sum_{i,j} p(i,j,\mu|\Jcal_A,\Jcal_B)$ yields~\eqref{eqn::ProbValid}. 
\end{proof}

While the process matrix obtained by conditioning on a particular measurement outcome can be signalling or even causally non-separable, the average process matrix is compatible with a definite causal order, as it must be.  From the construction of \eqref{eqn::ResultProc} we have $\sum_\mu p(\mu) W^{\mu} = \tr_f\left[\Upsilon\right] = \openone_{\Scal_2} \otimes \rho_{\Scal_1}$, where we have used $\sum_\mu \Pi^{(\mu)} = \openone_\mathcal{E}$ and $\rho_{\Scal_1} = \tr\left(\rho_{\Scal_1\mathcal{E}}\right)$. This is in agreement with the results from Ex.~\ref{ex::Bruk}, where we had $p(0)W + p(1) W_\sharp = \frac{1}{2} W + \frac{1}{2} \widetilde W = \frac{1}{4}\openone_{\Scal_1\Scal_2}$ and $\rho_{\Scal_1} = \frac{1}{4}\openone_{\Scal_1}$.

The proof of Cor.~\ref{Cor::ProcMat} suggests that a circuit with measurement has to be highly fine-tuned to yield a proper process matrix. Since the proper process matrices belong to a lower-dimensional vector space   \textit{random} choice of $\left\{V,\rho_{\Scal_1\mathcal{E}},\Pi^{(\mu)}\right\}$ will almost always lead to a process matrix that violates local causality. For the case of a properly fine-tuned circuit with measurement, the form~\eqref{eqn::Valid_Proc_Mat} allows us to investigate what resources are necessary to simulate process matrices with indefinite causal order. We will carry out this investigation in Sec.~\ref{sec::Resources}.

%%%%%%%%%%%%%%%%%%%%%%%%%%%%%%%%%%%%%%%%%%%
\subsection{Probability of success}
%%%%%%%%%%%%%%%%%%%%%%%%%%%%%%%%%%%%%%%%%%%
The probability of success for simulating a process matrix depends -- amongst others -- on its causal structure and the protocol that is employed for its implementation \cite{Genkina2012}. Using the scheme provided in Sec.~\ref{sec::Neumark}, we can show the following notable property:
\begin{remark}
With the protocol of Theorem \ref{thm::Neumark},  the success probability for the implementation of a process matrix that violates a causal inequality can exceed $1/2$.
\end{remark}
We illustrate this fact with an explicit example.

\begin{proof}
It has been shown in~\cite{feix_causally_2016} that the process matrix of Example~\ref{ex::Bruk} can be mixed with a certain amount of \textit{white noise} and still be causally non-separable. In detail, the process matrix 
\begin{gather}
  W' = \frac{\gamma}{4(\gamma+1)}\openone + \frac{1}{1+\gamma} W_{\mathrm{OCB}}\, \quad \text{is causally non-separable for} \ \gamma \in [0,\sqrt{2}-1)\, ,
\end{gather}
where $W_{\mathrm{OCB}}$ is the process matrix defined in~\eqref{eqn::Brukner_ProcMat}. In order to be able to implement $W'$ with success probability $p = (d_{A_1}d_{A_2})^{-1}\alpha$ according to the procedure provided in Sec.~\ref{sec::Neumark}, the relation $\openone - \alpha W' \geq 0$ has to hold. The minimal eigenvalue of $\openone - \alpha W'$ is equal to $4+4\gamma -2\alpha - \alpha\gamma$. For $\gamma \rightarrow \sqrt{2}-1$, the maximal allowed $\alpha$ tends to $4\sqrt{2}/(1+\sqrt{2})$. Consequently, there are causally non-separable process matrices that can be implemented with a probability  arbitrarily close to  $p = \sqrt{2}/(1+\sqrt{2}) \approx 0.59$. As $W'$ also violates a causal inequality for $\gamma<\sqrt{2}-1$~\cite{feix_causally_2016}, this means that there are process matrices that violate causal inequalities and can be implemented with a success probability of more than $50\%$. 
\end{proof}
It is important to contrast this result with the scheme for the simulation of process matrices proposed in~\cite{1367-2630-18-7-073037}; independent of the process matrix that is to be simulated, this scheme always yields a success probability of $1/16$. 
%On the other hand, our result is in agreement with Ref.~\cite{Genkina2012}, where it was shown that there are qubit channels from the future to the past (\textit{i.e.}, channels that transfer the state of a system from the future to the past) that can be simulated with a success probability that exceeds $50\%$. 
We now show that both entanglement, nonlocal operations, and non-Markovian features are needed to simulate process matrices that are causally non-separable.

%%%%%%%%%%%%%%%%%%%%%%%%%%%%%%%%%%%%%%%%%%%
%%%%%%%%%%%%%%%%%%%%%%%%%%%%%%%%%%%%%%%%%%%
\section{Resources for causally inseparable process matrices}
\label{sec::Resources}
%%%%%%%%%%%%%%%%%%%%%%%%%%%%%%%%%%%%%%%%%%%
%%%%%%%%%%%%%%%%%%%%%%%%%%%%%%%%%%%%%%%%%%%

The constructive procedure presented in Sec.~\ref{sec::Neumark} to simulate any process matrix $W$ via conditioning requires both genuine tripartite entanglement, as well as a nonlocal unitary $V$. In this section, we will show our main result: the simulation of a causally non-separable process matrix via conditioning \textit{always} requires both a genuinely entangled initial state, as well as a nonlocal unitary -- no matter the strategy. In a first step, we will prove this statement for the special case of the circuit depicted in Fig.~\ref{fig::Post_Sel}, which we will -- for obvious reasons -- call the \textit{parallel case}. This circuit is a special case of a circuit including two unitary evolutions (depicted in Fig.~\ref{fig::SerialCase}), and it is natural to ask if the requirement of initial entanglement can be lifted, if two intermediary evolutions are available. We show in Sec.~\ref{sec::Serial} that this is not the case, and initial entanglement and nonlocal unitaries are indeed crucial for the simulation of a causally non-separable process matrix.
 
%%%%%%%%%%%%%%%%%%%%%%%%%%%%%%%%%%%%%%%%%%%
\subsection{The parallel case}
\label{sec::parallel}
%%%%%%%%%%%%%%%%%%%%%%%%%%%%%%%%%%%%%%%%%%%

The parallel case is described by a triple $\{\rho_{\Scal_1\Ecal},V,\Pi^{(\mu)}\}$, \textit{i.e.}, an initial total state, an intermediary unitary dynamics and a conditioning on the environment. Possible resources for the simulation of a causally non-separable process matrix are the initial state $\rho_{\Scal_1\Ecal}$, as well as the unitary $V$. We have the following theorem:
\begin{theorem}
\label{thm::Resources1}
For the conditional simulation of a causally non-separable process matrix, it is necessary that both the initial state $\rho_{\Scal_1\mathcal{E}}$ is genuinely tripartite entangled as well as the unitary matrix $V$ is nonlocal, \textit{i.e.}, it cannot be written as a product operation in any possible bipartition.
\end{theorem}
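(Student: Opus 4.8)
The plan is to prove the contrapositive in two parts: (i) if the initial state $\rho_{\Scal_1\Ecal}$ fails to be genuinely tripartite entangled across the split $A_1 | B_1 | \Ecal$, then the resulting process matrix $W^{\mu}$ is causally separable; (ii) if the unitary $V$ is a product across one of the relevant bipartitions, then again $W^{\mu}$ is causally separable. Throughout I would work with the explicit formula~\eqref{eqn::Valid_Proc_Mat} from Cor.~\ref{Cor::ProcMat}, which expresses $(W^{\mu})^{\mathrm{T}}$ in terms of the triple $\{\rho_{\Scal_1\Ecal},V,\Pi^{(\mu)}\}$; the whole argument is really about how a tensor-product structure in the input ingredients survives the partial traces and partial transposes in that expression.

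For part (i), a state that is not genuinely tripartite entangled is, by definition, biseparable: it is a convex mixture of states each separable across \emph{some} bipartition of $\{A_1,B_1,\Ecal\}$. Since $W^{\mu}$ depends linearly (via~\eqref{eqn::Valid_Proc_Mat}) on $\rho_{\Scal_1\Ecal}$ up to the $\mu$-dependent normalisation $p(\mu)$, and since (by Thm.~\ref{thm::PropProc}) $p(\mu)$ is instrument-independent, it suffices to handle each pure biseparable term $\rho_{A_1} \otimes \rho_{B_1\Ecal}$ (and the two permutations) and then reassemble the convex mixture. For a term of the form $\rho_{A_1}\otimes\rho_{B_1\Ecal}$: plugging into~\eqref{eqn::Valid_Proc_Mat}, the factor $\rho_{A_1}^{\mathrm{T}_{A_1}}$ pulls out and the trace over $\Ecal$ of the $V$-sandwiched projector acts only on the $B_1\Ecal$ part, so $(W^{\mu})^{\mathrm{T}}$ factorises as $\rho_{A_1}^{\mathrm{T}} \otimes (\text{something on }B_2B_1A_2)$ — i.e. $W^{\mu}$ has the form $\rho_{A_1}\otimes \Upsilon^{B_2B_1A_2}$; being a valid process matrix, this is exactly the ``Alice acts on an uncorrelated input then everything else is causally ordered after'' case, hence causally separable (Alice $\rightarrow$ Bob, with Alice's slot trivially decoupled — more carefully one checks the remaining piece is a one-slot process tensor and invokes the characterisation around Eqs.~(\ref{uno})--(\ref{due})). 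The $B_1 | A_1\Ecal$ permutation is symmetric, giving the Bob $\rightarrow$ Alice branch; the $\Ecal | A_1B_1$ permutation gives a product of two single-party pieces, which is causally separable a fortiori. Convexity of the causally separable set then finishes part (i).

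For part (ii), I would take $V = V^{X_2}\otimes Z^{(\text{rest})}$ for each of the three bipartitions named in the theorem ($A_2$ vs $B_2\Ecal$; $B_2$ vs $A_2\Ecal$; $A_2B_2$ vs $\Ecal$) and again feed this into~\eqref{eqn::Valid_Proc_Mat}. When, say, $V = V^{B_2}\otimes Z^{A_2\Ecal}$, the conjugation by $V$ leaves the $B_2$ leg acted on only by the local unitary $V^{B_2}$, and the $\tr_\Ecal$ together with the $A_2\Ecal$-block of $Z$ produces an operator on $A_2 A_1 B_1$ tensored with a $B_2$-only factor; tracing/normalising shows $W^{\mu} = \openone_{B_2}\otimes \Upsilon^{B_1A_2A_1}$ (after the local unitary is absorbed, using that a local unitary on an output leg maps a valid one-slot process tensor to a valid one), which is precisely the Alice $\rightarrow$ Bob form~\eqref{eqn::Convex_proc_mat} with $q=1$. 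The $A_2$-product case is symmetric; the $A_2B_2 | \Ecal$-product case decouples the environment entirely, so the conditioning does nothing nontrivial and $W^{\mu}$ is just a (deterministic) quantum comb, hence causally ordered.

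The main obstacle I expect is the bookkeeping in making ``$\rho_{A_1}\otimes(\text{rest})$ is causally separable'' and ``$\openone_{B_2}\otimes(\text{rest})$ with a leftover local unitary is causally separable'' fully rigorous: one must verify that the leftover ``rest'' operator is genuinely a valid lower-slot process tensor (positivity is automatic from~\eqref{eqn::Valid_Proc_Mat}, but the trace/hierarchy conditions of the type~(\ref{uno})--(\ref{due}) need checking), and one must be careful that local unitaries on the \emph{output} Hilbert spaces $A_2,B_2$ — as opposed to input spaces — can be freely absorbed without spoiling the process-matrix constraints. A clean way to sidestep most of this is to argue directly at the level of the correlations: show that in each degenerate case there is an explicit ordering (Alice $\rightarrow$ Bob or Bob $\rightarrow$ Alice) such that Bob's marginal is instrument-independent, i.e. verify the no-signalling condition $\tr_{f}(W^{\mu})$ has the block form required by Eqs.~(\ref{uno})--(\ref{due}); this is a short computation from~\eqref{eqn::Valid_Proc_Mat} once the product structure is inserted.
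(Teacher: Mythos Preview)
Your approach matches the paper's: contrapositive by case analysis, plugging each product-form state and product-form unitary into the explicit formula~\eqref{eqn::Valid_Proc_Mat} and using the process-matrix constraints to force the result into a causally ordered form. The paper phrases those constraints as forbidden Pauli terms (App.~\ref{sec::allowed}) rather than the trace hierarchy~(\ref{uno})--(\ref{due}), but this is cosmetic. One minor simplification for part~(ii): the local factor $V^{X_2}$ cancels exactly against its adjoint because it conjugates $\openone_{X_2}$ inside $V^\dagger(\openone_{\Scal_2}\otimes\Pi^{(\mu)})V$, so no ``absorption'' argument is needed---you get the $\openone_{X_2}$ tensor factor on the nose.

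The one place to be careful is your convexity step in part~(i). The unnormalised map $\rho_{\Scal_1\Ecal}\mapsto p(\mu)W^{\mu}$ is indeed linear, but the local-causality constraints apply to the \emph{total} $W^{\mu}$, not to each summand coming from an individual biseparable term $\rho_k$. A single triple $\{\rho_k,V,\Pi^{(\mu)}\}$ need not satisfy Thm.~\ref{thm::PropProc}, so the corresponding $\widetilde W_k$ need not be a valid process matrix at all; you therefore cannot conclude that each piece is separately causally separable and then invoke convexity of the causally separable set. The paper's own proof is equally loose here---it treats the three pure product cases under the standing validity assumption and then simply asserts the conclusion for the mixture---so your proposal is at the same level of rigor as the paper. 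If you want a fully tight argument, you would need to control the forbidden terms across the whole sum rather than term by term.
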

\begin{proof}
To prove the first part of the theorem, let $V$ be an arbitrary unitary matrix, $\Pi^{(\mu)}$ an arbitrary orthogonal projection on the environment, and let the initial system-environment state be of the form $\rho_{\Scal_1\mathcal{E}} = \rho_{\Scal_1} \otimes \xi_{\mathcal{E}}$. We define $\Gamma^{(\mu)} \coloneqq V^{\dagger} \, \left(\openone_{\Scal_2} \otimes \Pi^{(\mu)}\right) \, V$; the resulting process matrix $W^{\mu}$ follows from~\eqref{eqn::Valid_Proc_Mat} and is given by 
\begin{gather}
\label{eqn::Proc_Mat_Prod_State_Sys_Env}
W^{\mu} = \frac{1}{p(\mu)}\left\{\left[\tr_{\mathcal{E}}\left(\Gamma^{(\mu)} \left(\openone_{\Scal_2} \otimes \xi_{\mathcal{E}} \right)\right)\right]^{\mathrm{T}}\otimes \rho_{\Scal_1}\right\} \coloneqq \Theta_{\Scal_2}\otimes \rho_{\Scal_1}\, . 
\end{gather}
Local causality forbids terms of the form $A_2B_2$, $B_2$ or $A_2$ to appear in the process matrix (see App.~\ref{sec::allowed}). If $\Theta_{\Scal_2}$ is not proportional to $\openone_{\Scal_2}$, one of these terms is bound to appear in $W^{\mu}$. With a meaningfully chosen conditioning, the process matrix is then of the form $\openone_{\Scal_2}\otimes \rho_{\Scal_1}$, which is causal (non-signalling).

A similar argument holds for the case $\rho_{\Scal_1\mathcal{E}} = \rho_{A_1} \otimes \xi_{B_1\mathcal{E}}$. For this case, we have 
\begin{gather}
\label{eqn::ProcMatSep}
W^{\mu} = \frac{1}{p(\mu)}\left\{\left[\tr_{\mathcal{E}}\left(\Gamma^{(\mu)} \left(\openone_{\Scal_2} \otimes (\xi_{B_1\mathcal{E}})^{\text{T}_{B_1}} \right)\right)\right]^{\mathrm{T}}\otimes \rho_{A_1} \right\} \coloneqq \omega_{B_2B_1A_2} \otimes \rho_{A_1}\, .
\end{gather}
Local causality forbids terms of the form $B_2$, $A_2B_2$, $B_2B_1$ and $A_2B_2B_1$. This forces $\omega_{B_2B_1A_2}$ to be of the form $\omega_{B_2B_1A_2} = \openone_{B_2} \otimes \widetilde \omega_{B_1A_2}$, which leads to a causally separable process matrix (Alice goes before Bob). The same argument applies for an initial state of the form $\rho_{\Scal_1\mathcal{E}} = \rho_{B_1} \otimes \xi_{A_1\mathcal{E}}$. Consequently, any initial state $\rho_{\Scal_1\mathcal{E}}$ of the form 
\begin{gather}
\rho_{\Scal_1\mathcal{E}} = p \, \rho_{A_1B_1} \otimes \xi_{\mathcal{E}} + q \, \rho_{A_1} \otimes \xi_{B_1\mathcal{E}} + (1-p-q) \, \rho_{B_1} \otimes \xi_{A_1\mathcal{E}} \,, \qquad \forall \ p,q, (q+p) \in \left[0,1\right]
\end{gather}
does not lead to a causally non-separable process matrix.

To prove the second part of the theorem, let $\rho_{\Scal_1\mathcal{E}}$ be an arbitrary state and $V = V^{A_2B_2}\otimes U^{\mathcal{E}}$ a unitary of product form. The resulting process matrix is given by 
\begin{gather}
W^{\mu} = \frac{1}{p(\mu)}\left\{\openone_{\Scal_2} \otimes \left[\tr_\mathcal{E}\left( \left(\openone_{\Scal_1} \otimes \widetilde \Pi^{(\mu)}\right) \rho_{\Scal_1\mathcal{E}}^{\text{T}_{\Scal_1}} \right)\right]^{\mathrm{T}}\right\}  \, ,
\end{gather}
where $\widetilde \Pi^{(\mu)} = U^{\mathcal{E}^{\dagger}} \, \Pi^{(\mu)} \, U^{\mathcal{E}}$. This is obviously a causal process matrix (non-signalling).

If $V$ is of the form $V = V^{A_2} \otimes U^{B_2\mathcal{E}}$, we obtain the following process matrix: 
\begin{gather}
W^{\mu} = \frac{1}{p(\mu)}\left\{\openone_{A_2} \otimes \left[\tr_{\mathcal{E}}\left(\Xi^{(\mu)} \left(\openone_{B_2} \otimes \rho_{\Scal_1\mathcal{E}}^{\text{T}_{\Scal_1}} \right)\right)\right]^{\mathrm{T}}\right\} \coloneqq \openone_{A_2} \otimes \eta_{A_1B_1B_2} \, 
\end{gather} 
where $\Xi^{(\mu)} = U^{B_2\mathcal{E}\dagger} \left(\openone_{B_2} \otimes \Pi^{(\mu)} \right) U^{B_2\mathcal{E}}$. Again, this process matrix is causal (it allows signalling from Bob to Alice only). A similar argument holds for total unitaries of the form $U = U^{B_2} \otimes V^{A_2\mathcal{E}}$. Consequently only non-product unitaries lead to causally non-separable process matrices. 
\end{proof}
In agreement with the results in Sec.~\ref{sec::Neumark} genuine tripartite entanglement does not mean that Alice and Bob have to initially share entanglement amongst each other. However, the total state of the environment, Alice, and Bob has to be entangled in any possible bipartition. Genuine tripartite entanglement in the initial state constitutes a quantum memory of the past, that can be used to implement a causally non-separable process. In other words, pre-shared quantum memory is a crucial resource for the simulation of acausality. 

Non-product unitaries are signalling (non-causal)~\cite{beckman_causal_2001, piani_properties_2006, bennett_capacities_2003}, which makes the above theorem perspicuous; acausality can only be simulated if a resource is available that enables communication between Alice, Bob and the environment. Such a unitary propagates the initial memory in a detectable way. Consequently, it is the \textit{non-Markovianity} of the underlying circuit that enables the simulation of causally indefinite processes. Having these results for the parallel case at hand, we now discuss, if the requirements of initial entanglement and nonlocal unitaries can be relaxed if two intermediate unitary dynamics are available. 

%%%%%%%%%%%%%%%%%%%%%%%%%%%%%%%%%%%%%%%%%%%
\subsection{Serial case}
\label{sec::Serial}
%%%%%%%%%%%%%%%%%%%%%%%%%%%%%%%%%%%%%%%%%%%
In the previous sections, we have analysed the implementation of causally unordered processes by means of a parallel circuit with additional conditioning. Obviously, if we allow for \textit{any} possible initial state, the parallel circuit is a special case of the \emph{serial} one (depicted in Fig.~\ref{fig::SerialCase}), \textit{i.e.}, a circuit with two intermediary unitaries. It is hence natural to ask, if a serial circuit with measurement allows us to relax the requirement of initial tripartite entanglement and nonlocality of the system-environment unitaries for the simulation of causally non-separable process matrices. This question is answered by the following theorem: 

%%%%%%%%%%%%%%%%%%%%%%%%%%%%%%%%%%%%%%%%%%%
\begin{figure}
\centering
\includegraphics[scale=1.]{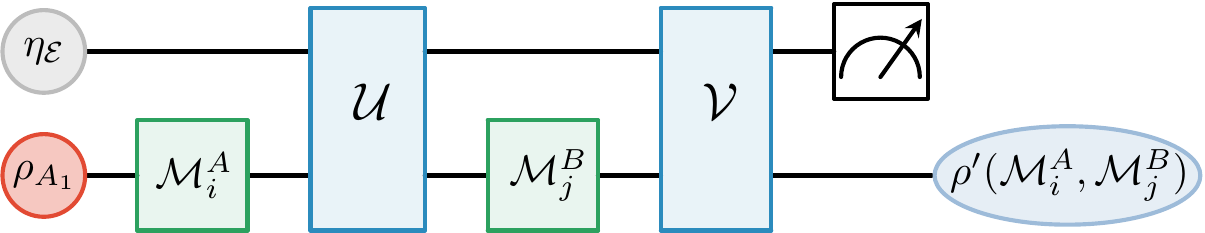}
\caption{\textit{Serial circuit with measurement}. The initial system-environment state is given by $\rho_{\Scal}\otimes \eta_\mathcal{E}$, the final state after conditioning on the outcome $\mu$ on the environment is $\rho'(\Mcal_i^A,\Mcal_j^B) = \Tcal^{(\mu)}\left[\Mcal_i^A,\Mcal_j^B\right]$ (see Eq.~\eqref{eqn::CircPostselection})}
\label{fig::SerialCase}
\end{figure}
%%%%%%%%%%%%%%%%%%%%%%%%%%%%%%%%%%%%%%%%%%%

\begin{theorem}
\label{thm::Serial}
The conditional simulation of a causally non-separable process matrix with a serial circuit requires initial system-environment entanglement and nonlocal intermediate system-environment unitaries.
\end{theorem}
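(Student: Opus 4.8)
The plan is to reduce the serial case to the parallel case by absorbing one of the two unitaries into the initial state, and then invoke Theorem \ref{thm::Resources1}. Concretely, the serial circuit of Fig.~\ref{fig::SerialCase} is specified by an initial product state $\rho_{\Scal_1} \otimes \eta_{\mathcal{E}}$, a first system-environment unitary $U_1$ acting before Bob's operation (but after Alice's, in the Alice-then-Bob ordering of the circuit), a second unitary $U_2$ acting after both operations, and a projection $\Pi^{(\mu)}$ on the environment. I would first argue that if \emph{either} the initial state fails to be genuinely tripartite entangled \emph{or} the unitaries fail to be jointly nonlocal in the relevant bipartitions, then the reconstructed process matrix $W^{\mu}$ is causally separable. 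The contrapositive then gives the theorem.

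First I would treat the initial state: suppose $\rho_{\Scal_1} \otimes \eta_{\mathcal{E}}$ is literally a product across system and environment (which, for an initial state of the form assumed in the serial scheme, is automatic), but moreover suppose it has no entanglement in one of the bipartitions $A_1 : B_1\mathcal{E}$ or $B_1 : A_1\mathcal{E}$. Because $A_1$ and $B_1$ sit in Alice's and Bob's separated laboratories, and the only operations touching $\Scal_1$ before the first unitary are the local instruments $\Mcal_i^A$ and $\Mcal_j^B$, the same argument as in the proof of Theorem \ref{thm::Resources1} applies verbatim: writing $W^{\mu}$ via the analogue of~\eqref{eqn::Valid_Proc_Mat} for the serial circuit (with $\Gamma^{(\mu)} \coloneqq U_1^{\dagger} U_2^{\dagger} (\openone_{\Scal_2} \otimes \Pi^{(\mu)}) U_2 U_1$ playing the role of the single $\Gamma^{(\mu)}$), a product structure $\rho_{A_1} \otimes \xi_{B_1\mathcal{E}}$ forces $W^{\mu}$ into the form $\openone_{B_2} \otimes \widetilde\omega_{B_1 A_2}$, i.e. causally ordered Alice $\to$ Bob, and symmetrically for the other bipartition. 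The key point is that the combined channel $U_2 U_1$ followed by conditioning is itself just \emph{some} system-environment operation on $\Scal_1\mathcal{E}$, so the structural argument of Theorem \ref{thm::Resources1} does not care how many unitaries it is built from.

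Next I would treat nonlocality of the unitaries. Here the subtlety is that it is the \emph{pair} $(U_1, U_2)$ that matters, not each one individually, and $U_1$ acts at an intermediate time between Alice's slot and Bob's slot. If $U_1$ factorizes across $A_2 : B_2\mathcal{E}$ (no signalling into the $A_2$ wire) and $U_2$ factorizes the same way, then the $A_2$ output line is decoupled and $W^{\mu} = \openone_{A_2} \otimes (\cdots)$, which is causally separable; symmetric statements hold for the $B_2 : A_2\mathcal{E}$ and $\mathcal{E}$ bipartitions. The cleanest way to organize this is to observe, as in the parallel proof, that a unitary that is a product in a given system-environment cut is non-signalling in that cut, hence composing two such unitaries (in compatible cuts) is still non-signalling there, and a conditioning on $\mathcal{E}$ cannot create signalling that was absent. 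Then the reconstructed $W^{\mu}$ inherits a tensor-product identity factor and Eq.~\eqref{eqn::Convex_proc_mat} is satisfied trivially.

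I expect the main obstacle to be making precise what ``nonlocality of the intermediate system-environment unitaries'' should mean in the serial case and checking that the right combination of factorization hypotheses on $U_1$ and $U_2$ genuinely exhausts the causally-separable cases — in particular handling the intermediate unitary $U_1$, which can shuttle information between Bob's pre-operation input $B_1$ and the environment in a way that has no analogue in the parallel circuit. The resolution is that $U_1$ acts before Bob's instrument, so any such shuttling only redistributes what is already correlated in $\rho_{\Scal_1}\otimes\eta_{\mathcal{E}}$; since that state is a product in the $\Scal_1 : \mathcal{E}$ cut and (by the first part of the argument) must also be entangled across $A_1:B_1\mathcal{E}$ and $B_1:A_1\mathcal{E}$ to avoid causal separability, the serial freedom buys nothing. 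Once this is pinned down, the proof is a direct transcription of the parallel argument with $\Gamma^{(\mu)}$ redefined to include both unitaries, so I would present it briefly and refer back to the proof of Theorem \ref{thm::Resources1} for the repeated structural steps.
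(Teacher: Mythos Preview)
Your reduction to the parallel case has a genuine gap: in the serial circuit Bob's operation $\Mcal_j^B$ sits \emph{between} the two unitaries, so the object $\Gamma^{(\mu)}\coloneqq U_1^\dagger U_2^\dagger(\openone_{\Scal_2}\otimes\Pi^{(\mu)})U_2 U_1$ does not represent the circuit. There is no analogue of~\eqref{eqn::Valid_Proc_Mat} with a single operator sandwiching the initial state, because $U_1$ acts on $A_2\otimes\mathcal{E}$ and outputs $B_1\otimes\mathcal{E}'$ \emph{before} Bob's slot, while $U_2$ acts on $B_2\otimes\mathcal{E}'$ \emph{after} it; the two cannot be composed past $\Mcal_j^B$. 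Relatedly, $B_1$ is not a tensor factor of the initial state at all in the serial scheme (the initial system is only $A_1$), so speaking of bipartitions like $A_1:B_1\mathcal{E}$ of $\rho_{\Scal_1}\otimes\eta_\mathcal{E}$ is ill-posed. You correctly flag the intermediate unitary as the obstacle, but your proposed resolution (``it only redistributes what is already correlated'') does not address the structural point that the serial Choi state simply does not factor through a single $\Gamma^{(\mu)}$.

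The paper circumvents this by a supermap argument rather than a direct reduction. If $\rho_{\Scal\mathcal{E}}=\rho_{A_1}\otimes\eta_\mathcal{E}$, then Alice's action produces $\Mcal_i^A[\rho_{A_1}]$ tensored with $\eta_\mathcal{E}$, and everything downstream---$\eta_\mathcal{E}$, $\Ucal$, the open $B$-slot, $\Vcal$, and the projection---is packaged as a (non-deterministic) supermap $\widetilde S^{(\mu)}$ acting on $\Mcal_j^B$. Writing its Choi state $\$^{(\mu)}\in\Lcal(\Hcal_{B_2'}\otimes\Hcal_{A_2}\otimes\Hcal_{B_2}\otimes\Hcal_{B_1})$ gives $W^\mu=\tfrac{1}{p(\mu)}\tr_{B_2'}\$^{(\mu)}\otimes\rho_{A_1}$, which has exactly the form~\eqref{eqn::ProcMatSep} and is then causally separable by the forbidden-terms argument. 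For the unitaries, the paper treats $U$ and $V$ individually: a product $V=V^{B_2}\otimes Z^\mathcal{E}$ forces $W^\mu=\openone_{B_2}\otimes\upsilon_{B_1A_2A_1}$, while a product $U=U^{A_2}\otimes Q^\mathcal{E}$ yields $W^\mu=\nu_{B_1A_2}\otimes\varrho_{B_2A_1}$ with $\nu_{B_1A_2}$ the Choi state of a unitary (hence maximally entangled), and the forbidden $B_2B_1A_2$ and $B_2B_1A_2A_1$ terms then force $\varrho_{B_2A_1}=\openone_{B_2}\otimes\varphi_{A_1}$. Your treatment of the unitaries (``$U_1$ factorizes across $A_2:B_2\mathcal{E}$'') again misidentifies the spaces $U_1$ acts on and does not capture this second, more delicate case.
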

\begin{proof}Let $\rho_{\Scal\mathcal{E}} = \rho_{A_1} \otimes \eta_\mathcal{E}$ be the initial system-environment state and let $\Ucal$ and $\Vcal$ be arbitrary system-environment unitary maps. The final system state obtained by conditioning on the outcome $\mu$ on the environment, given $\Mcal_i^A$ and $\Mcal_j^B$, is (see Fig.~\ref{fig::SerialCase}):
\begin{gather}
\label{eqn::serialCase}
\rho'(\Mcal_i^A,\Mcal_j^B) = \Tcal^{(\mu)}\left(\Mcal_i^{A}, \Mcal_j^{B}\right) = \tr_{\Ecal}\left(\widetilde \Ncal^{(\mu)} \circ \Vcal \circ \Mcal_j^B \circ \Ucal \circ \Mcal_i^A [\rho_{A_1} \otimes \eta_\Ecal] \right)
\end{gather}
Analogous to the definition of the conditioned process tensor (with $\widetilde{\Ncal}^{(\mu)} = \frac{1}{p(\mu)}\Ncal^{(\mu)}$, see~\eqref{eqn::CircPostselection}), this equation can be rewritten in terms of a (non-deterministic) supermap $\widetilde S^{(\mu)}$ acting on the CP map $\Mcal_j^{B}$ (see~\cite{chiribella_transforming_2008} and Fig.~\ref{fig::SerialSuper}):
\begin{gather}
\label{eqn::supermap}
\rho'(\Mcal_i^A,\Mcal_j^B) = \frac{1}{p(\mu)}\left(\widetilde{S}^{(\mu)}\left[\Mcal_j^B\right]\right)\left[\Mcal_i^A\left[\rho_{A_1}\right]\right]\, ,
\end{gather}
where $\widetilde{S}^{(\mu)}$ is a completely positive map (in the sense of~\cite{chiribella_transforming_2008}), and $\widetilde{S}^{(\mu)}[\Mcal_j^B\,]: \Lcal\left(\Hcal_{A_2}\right) \rightarrow \Lcal(\Hcal_{B_2'}\,)$.
\begin{figure}
\centering
\includegraphics[scale=0.95]{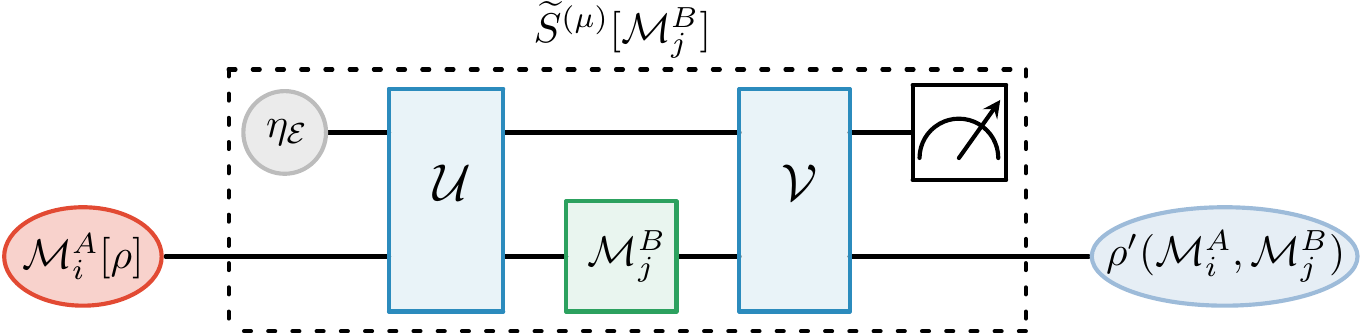}
\caption{\textit{Serial circuit and Supermaps}. The action of the two unitary maps $\mathcal{U}$ and $\mathcal{V}$, the initial state $\eta_\mathcal{E}$ and the conditioning on $\mu$ can be written as a non-deterministic supermap $\widetilde{S}^{(\mu)}$ acting on $\Mcal_j^B$. The resulting map $\widetilde{S}_\mu\left(\Mcal_j^B\right)$ maps $\Mcal_i^A\left(\rho_\Scal\right) \in \Lcal\left(\Hcal_{A_2}\right)$ onto the correct output state $\rho_\Scal'\in \Lcal\left(\Hcal_{B_2'}\right)$.}
\label{fig::SerialSuper}
\end{figure}
We distinguish between $\Hcal_{B_2'}$ and $\Hcal_{B_2}$ for notational purposes only, \textit{i.e.}, $\Hcal_{B_2'} \cong\Hcal_{B_2}$.

Let $S^{(\mu)}$ denote the analogous CP map to $\widetilde{S}^{(\mu)}$ on the level of Choi states, \textit{i.e.} $S^{(\mu)}[M^{B_2B_1}_j] \in \Lcal(\Hcal_{B_2'}) \otimes \Lcal\left(\Hcal_{A_2}\right)$ is the Choi state of $\widetilde{S}^{(\mu)}[\Mcal_j^B]$. With this,~\eqref{eqn::supermap} can be written as
\begin{gather}
\label{eqn::actionSuper1}
\rho'(\Mcal_i^A,\Mcal_j^B) = \frac{1}{p(\mu)} \tr_{A_2}\left[\left(\openone_{B_2'} \otimes \Mcal^A_i\left[\rho_{A_1}\right]^\mathrm{T}\right)S^{(\mu)}[M^{B_2B_1}]\right].
\end{gather}
Defining the Choi state of the map $S^{(\mu)}$ as $\$^{(\mu)} \in \Lcal(\Hcal_{B_2'} \otimes \Hcal_{A_2}\otimes \Hcal_{B_2}\otimes\Hcal_{B_1})$, we can rewrite the action of $S^{(\mu)}$ via (see~\eqref{eqn::ActionChoi}):
\begin{gather}
S^{(\mu)}\left(M^{B_2B_1}_j\right) = \tr_{B_2B_1}\left[\left(\openone_{B_2'A_2}\otimes (M^{B_2B_1}_j)^\mathrm{T}\right)\$^{(\mu)}\right].
\end{gather}
With this, \eqref{eqn::actionSuper1} reads
\begin{gather}
\label{eqn::actionSuper3}
\rho'(\Mcal_i^A,\Mcal^B_j)= \frac{1}{p(\mu)}\tr_{\Scal_2\Scal_1}\left\{\left[ \rho_{A_1}^\mathrm{T}\,M^{A_2A_1}_i\right]^{\mathrm{T}_{A_2}} \left[\left( M^{B_2B_1}_j\right)^\mathrm{T}\, \$^{(\mu)}\right]\right\} \,,
\end{gather}
Hence, we obtain
\begin{gather}
p(i,j|\Jcal_A,\Jcal_B,\mu) = \frac{1}{p(\mu)} \tr\left\{\left(\tr_{B_2'}\$^{(\mu)}\otimes \rho_{A_1}\right)\left[(M_i^{A_2A_1})^{\mathrm{T}} \otimes (M_j^{B_2B_1})^{\mathrm{T}}\right)\right\}\, ,
\end{gather}
which means that for the serial case with initial product state the resulting process matrix is of the form
\begin{gather}
W^{\mu} = \frac{1}{p(\mu)} \tr_{B_2'}\$^{(\mu)} \otimes \rho_{A_1}\, .
\end{gather}
This process matrix has exactly the same form as~\eqref{eqn::ProcMatSep}, the process matrix obtained for the parallel case with a separable initial state. It is causally separable (only allowing signalling from Alice to Bob) for the same reasons. Consequently, any separable initial state $\rho_{\mathcal{SE}}$ will lead to a causally non-separable process matrix.

The necessity of nonlocal unitaries $U$ and $V$ can be proven in a similar way as in Thm.~\ref{thm::Resources1}. Let $V=V^{B_2}\otimes Z^{\Ecal}$. Rewriting~\eqref{eqn::serialCase} in terms of Choi states, it is straightforward to see that the resulting process matrix is of the form $W^{\mu}=\openone_{B_2} \otimes \upsilon_{B_1A_2A_1}$, which is causal (Alice $\rightarrow$ Bob). Analogously, a product unitary $U = U^{A_2} \otimes Q^{\Ecal}$ leads to a process matrix of the form $W^{\mu} = \nu_{B_1A_2} \otimes \varrho_{B_2A_1}$, where $\nu_{B_1A_2}$ is the Choi state of a unitary map. Up to normalisation, $\nu_{B_1A_2}$ is a maximally entangled state, which implies that terms of the form $B_1A_2$ appear in its decomposition. As terms of the form $B_2B_1A_2$ and $B_2B_1A_2A_1$ cannot appear in $W^\mu$, this implies that $\varrho_{B_2A_1} = \openone_{B_2} \otimes \varphi_{A_1}$, which means that the resulting process matrix is causal (Alice $\rightarrow$ Bob).
\end{proof}

As for the parallel case, the nonlocality of the system-environment unitaries is perspicuous. If the first unitary was of product form, local causality in Alice's laboratory would automatically dictate a global order between the two laboratories. Nonlocality of the final system-environment unitary enables communication between Bob and the environment, which is necessary to `blur' the causal order between Alice and Bob.

As for the serial case, the theorem shows the importance of genuine pre-existing quantum memory, and system-environment unitaries that transport memory in a detectable way.    This implies the following straightforward Corollary:
\begin{corollary}
Independent of the strategy, a \textit{Markovian} process is not sufficient for the conditional simulation of a causally unordered processes. 
\end{corollary}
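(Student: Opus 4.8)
The plan is to show that a Markovian process --- one in which the underlying circuit contains no memory degrees of freedom that survive between the parties --- cannot, even with conditioning, produce a causally non-separable process matrix. The key observation is that ``Markovian'' (in the sense of Fig.~\ref{fig::CausalChannMarkov}) means precisely that the environment can be taken to be trivial, or equivalently that the initial system-environment state factorises as $\rho_{\Scal_1}\otimes\eta_\mathcal{E}$ \emph{and} the intermediate system-environment unitaries act trivially on the system relative to the environment in the relevant sense. In fact, the cleanest route is to note that a Markovian process is a special case of both the parallel circuit of Fig.~\ref{fig::Post_Sel} and the serial circuit of Fig.~\ref{fig::SerialCase}, with the initial state being a product state $\rho_{\Scal_1}\otimes\eta_\mathcal{E}$ across the system-environment cut.

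First I would make precise what is meant by a Markovian circuit with measurement: it is a circuit of the form in Fig.~\ref{fig::SerialCase} (or Fig.~\ref{fig::Post_Sel}) in which the initial state $\rho_{\Scal_1\mathcal{E}}$ is of product form $\rho_{\Scal_1}\otimes\eta_\mathcal{E}$ --- there is no pre-existing correlation between the ``system'' line that Alice and Bob act on and the ancillary environment that Charlie measures. (Markovianity is exactly the statement that the process tensor, before conditioning, is a concatenation of independent CPTP maps acting on an uncorrelated initial state, which forces this product structure once the environment is identified with Charlie's ancilla.) Then I would simply invoke Theorem~\ref{thm::Resources1} for the parallel case and Theorem~\ref{thm::Serial} for the serial case: both theorems establish that a separable (product) initial system-environment state can only ever yield a causally separable process matrix after conditioning, regardless of the choice of unitary and of the conditioning projector. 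Since any Markovian implementation strategy --- parallel or serial --- has precisely this product initial state, it falls under the hypotheses of one of these two theorems, and hence cannot simulate a causally non-separable process matrix.

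The main content of the corollary is therefore bookkeeping: identifying ``Markovian'' with ``product initial state across the system-environment cut'' and checking that the two preceding theorems between them cover every possible circuit topology one could call Markovian. The one subtlety I would flag --- and what I expect to be the only real obstacle --- is making sure the identification is airtight: one must argue that \emph{any} conditioned-circuit simulation of a Markovian process can be brought into the parallel or serial form with a \emph{product} initial state, rather than merely that the canonical Markovian picture of Fig.~\ref{fig::CausalChannMarkov} has this form. This is immediate if one defines Markovianity at the level of the circuit (no shared initial correlations, independent intermediate maps), since then the environment Charlie measures is by construction uncorrelated with the system line initially; but it is worth stating explicitly so that the corollary is not read as merely ``the scheme of Sec.~\ref{sec::Neumark} needs entanglement'' --- the point is the stronger, strategy-independent statement, which follows directly from the ``no matter the strategy'' clauses of Theorems~\ref{thm::Resources1} and~\ref{thm::Serial}.
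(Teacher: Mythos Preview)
Your proposal is correct and follows essentially the same route as the paper: the corollary is stated immediately after Theorem~\ref{thm::Serial} as a direct consequence of Theorems~\ref{thm::Resources1} and~\ref{thm::Serial}, with the identification that a Markovian process (as in Fig.~\ref{fig::CausalChannMarkov}) lacks the initial system--environment correlations and nonlocal unitaries those theorems show to be necessary. The paper's own justification is in fact briefer than yours---it simply remarks that a process which cannot store information in the environment and access it later has its global temporal order fixed by local causality---so your explicit bookkeeping (reducing any Markovian strategy to the product-initial-state hypothesis of the two theorems) is, if anything, a slightly more careful version of the same argument.
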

For a process -- like, \textit{e.g.}, the one depicted in Fig.~\ref{fig::CausalChannMarkov} -- that does not allow to store information in the environment and access it at a later time, local causality fixes the global temporal order. The simulation of processes with indefinite causal order via conditioning requires underlying non-Markovian dynamics, \textit{i.e.}, dynamics that display detectable memory effects.

%%%%%%%%%%%%%%%%%%%%%%%%%%%%%%%%%%%%%%%%%%%
%%%%%%%%%%%%%%%%%%%%%%%%%%%%%%%%%%%%%%%%%%%
\section{Conclusions}
\label{sec::Concl}
%%%%%%%%%%%%%%%%%%%%%%%%%%%%%%%%%%%%%%%%%%%
%%%%%%%%%%%%%%%%%%%%%%%%%%%%%%%%%%%%%%%%%%%

Quantum mechanics is compatible with the existence of processes without a definite causal order. To date, however,  no such processes has been found in nature or has been  realised experimentally, besides the quantum switch~\cite{PhysRevA.88.022318, procopio_experimental_2015, rubino_experimental_2017}. In this article, we proposed a way to simulate every causally unordered process through a causally ordered circuit with postselection.  
With respect to previous results of this type, we have found a simulation strategy that ensures a higher probability of success, facilitating the experimental  observation of causal anomalies. 

The simulation of causally unordered processes can be obtained by a simple circuit with measurement on the environment. This simulation works also for  process matrices that would be forbidden if purification postulates were imposed~\cite{araujo_purification_2017}. It is important to note that -- in contrast to the results of~\cite{silva_connecting_2017,1367-2630-18-7-073037} -- the conditioning in our scheme happens on the environment, and not on the outputs of Alice and Bob; Charlie can decide whether or not to record data, without having direct access to Alice's or Bob's degrees of freedom. Additionally, beyond the proof of existence, we provided a \textit{constructive} way to obtain a triple $\{\rho_{\Scal_1\Ecal},V,\Pi^{(\mu)}\}$ of initial state, unitary evolution and measurement outcome on the environment that yields a given process matrix $W$. 

Even though conditioning seems like a cherry-picking of data to obtain statistics that display causal anomalies, it is not a mathematical post-processing procedure, performed offline, but an experimental procedure; data is collected, whenever the measurement on the environment yields the correct outcome. The whole procedure is well-defined, as the probability for successful conditioning does not depend on the choices of instruments.  In a slightly different context, this has also been noted in~\cite{silva_connecting_2017}, where the authors showed that proper process matrices can be simulated by two-time states that have the property that the probability rule becomes linear, \textit{i.e.}, probabilities do not depend on the choice of instruments. This understanding of the conditioning process makes causality become an emergent average property. For example, for the conditioning process presented in Ex.~\ref{ex::Bruk}, both process matrices $W$ and $W_\sharp$ obtained by conditioning on the two possible outcomes $0$ and $1$ are causally non-separable, but their average $p(0)W + p(1) W_\sharp$ is -- as it should -- causally ordered.

The simulation of causally unordered processes is highly non-unique.  A randomly chosen triple $\{\rho_{\Scal_1\Ecal},V,\Pi^{(\mu)}\}$ almost always leads to a process matrix that violates local causality. Put differently, there are spatial correlations that cannot be understood as temporal correlations~\cite{costa_unifying_2017}. We have provided a necessary and sufficient condition for a conditioned circuit to yield a proper process matrix. These results also show that, should this kind of conditioning actually happen in nature, it is a highly fine-tuned process.

Finally, we analysed in detail the resources necessary to implement a causally non-separable process matrix via a circuit with conditioning. Our results show that the implementation of causally unordered processes requires both genuine tripartite entanglement in the initial state as well as nonlocal unitary dynamics. The requirement of initial entanglement cannot even be lifted if we allow for more nonlocal communication. Initial entanglement represents a genuine quantum memory of the past, while a nonlocal unitary dynamics allows for a detectable propagation of this quantum memory. In this sense the obtained results -- loosely speaking\footnote{A generally agreed upon definition of non-Markovianity in the quantum regime is still subject of debate.} -- establish that only genuinely quantum non-Markovian processes allow for the simulation of causally non-separable processes via conditioning. This result, however, only holds for the two-party case; if more parties are involved, causal inequalities can be violated with purely classical processes~\cite{baumeler_maximal_2014}.

Our results provide a complete picture of the resources that go into the simulation of (two party) causally non-separable processes via conditioning. The success probability $p(\mu)$ of the implementation depends on the respective choice of circuit (but not on the choice of instruments). It remains an open question if the maximum success probability for the serial case is -- except for trivial cases -- always strictly larger than for the parallel case. This is certainly true for process matrices that allow for one-way signalling; they can be simulated deterministically in the serial case, but require conditioning in the parallel one. Determining the relation between signalling and the maximum success probability is an interesting avenue of future research, which we plan to explore in a future work. 

%While we have shown that the success probability to simulate a causally non-separable process can exceed $1/2$, the maximal probability to experimentally obtain a process matrix that is non-separable/can violate a causal inequality is still unknown. Likewise, it is not clear, if the probability to simulate a process matrix that \textit{maximally} violates a causal inequality~\cite{Brukner2015Bounding} is bounded by $1/2$. Advances towards the solution of these problem are subject of future work.

%%%%%%%%%%%%%%%%%%%%%%%%%%%%%%%%%%%%%%%%%%%
%%%%%%%%%%%%%%%%%%%%%%%%%%%%%%%%%%%%%%%%%%%
\section*{Acknowledgements}
%%%%%%%%%%%%%%%%%%%%%%%%%%%%%%%%%%%%%%%%%%%
%%%%%%%%%%%%%%%%%%%%%%%%%%%%%%%%%%%%%%%%%%%

We thank William Humphreys, Fabio Costa and Magdalena Zych for valuable discussions. SM is supported by the Monash Graduate Scholarship (MGS), Monash International Postgraduate Research Scholarship (MIPRS) and the J L William Scholarship. GC is supported by the John Templeton Foundation and by the National Science Foundation of China through grant 11675136. KM is supported through ARC FT160100073.

%%%%%%%%%%%%%%%%%%%%%%%%%%%%%%%%%%%%%%%%%%%
%%%%%%%%%%%%%%%%%%%%%%%%%%%%%%%%%%%%%%%%%%%
%%%%%%%%%%%%%%%%%%%%%%%%%%%%%%%%%%%%%%%%%%%
\appendix
\section*{APPENDIX}

%%%%%%%%%%%%%%%%%%%%%%%%%%%%%%%%%%%%%%%%%%%
%%%%%%%%%%%%%%%%%%%%%%%%%%%%%%%%%%%%%%%%%%%
\section{Allowed terms in the process matrix \texorpdfstring{$W^{B_2B_1A_2A_1}$}{TEXT}}
\label{sec::allowed}
%%%%%%%%%%%%%%%%%%%%%%%%%%%%%%%%%%%%%%%%%%%
%%%%%%%%%%%%%%%%%%%%%%%%%%%%%%%%%%%%%%%%%%%

Process matrices must respect local causality. This requirement is expressed explicitly in~\eqref{eqn::Local_Caus}. The process matrix $W^{A_2A_1B_2B_1}$ is positive -- and hence Hermitian. Consequently, it can be written in the form~\cite{OreshkovETAL2012}
\begin{gather}
\label{eqn::ProcPauli}
  W^{B_2B_1A_2A_1} = \sum_{\alpha \beta \gamma \epsilon = 0} w_{\alpha \beta \gamma \epsilon} \sigma_\alpha^{B_2} \otimes \sigma_\beta^{B_1} \otimes \sigma_\gamma^{A_2} \otimes \sigma_\epsilon^{A_1} \, 
\end{gather}
where the matrices $\left\{\sigma_a^{X_y} \right\}_{a=0}^{d_{X_y}^2-1}$ are generalized Pauli matrices, \textit{i.e.}, they are traceless (except for $\sigma_0^{X_y} = \openone_{X_y}$) and $\tr\left(\sigma_a^{X_y} \sigma_b^{X_y}\right) = d_{X_y}\delta_{ab}$. The prefactor $w_{0000}$ is equal to $\frac{1}{d_{A_1B_1}}$ for correct normalisation. Not all positive matrices $W^{A_2A_1B_2B_1}$ of the form~\eqref{eqn::ProcPauli} satisfy the requirement~\eqref{eqn::Local_Caus} for local causality; in order for~\eqref{eqn::Local_Caus} to hold for all CPTP maps $M^{B_2B_1}$ and $M^{A_2A_1}$, $W^{A_2A_1B_2B_1}$ can only contain terms that do not appear in $(M^{A_2A_1})^{\mathrm{T}} \otimes (M^{B_2B_1})^{\mathrm{T}}$ (except for $\openone_{A_2A_1B_2B_1}$). Otherwise, it would always be possible to find two valid CPTP maps, such that~\eqref{eqn::Local_Caus} is violated~\cite{OreshkovETAL2012}.

Using the property $\tr_{X_2} \left(M^{X_2X_1}\right) = \openone_{X_1}$ of CPTP maps, we can explicitly write down conditions that define the terms that can appear in the decomposition~\eqref{eqn::ProcPauli}. In a concise notation, we have
\begin{align}
\notag
&\tr\left(\sigma_{\alpha}^{X_2}W^{\mathrm{T}}\right) = 0, \quad \tr\left[\left(\sigma_{\alpha}^{X_2}\otimes \sigma_\beta^{Y_2}\right)W^{\mathrm{T}}\right] = 0 \quad \tr\left[\left(\sigma_{\alpha}^{X_2}\otimes \sigma_\beta^{X_1}\right)W^{\mathrm{T}}\right] = 0, \\
\label{eqn::ListForbidden}
&\tr\left[\left(\sigma_{\alpha}^{X_2}\otimes \sigma_\beta^{X_1} \otimes \sigma_\gamma^{Y_2}\right)W^{\mathrm{T}}\right] = 0, \quad \text{and} \quad \tr\left[\left(\sigma_{\alpha}^{X_2}\otimes \sigma_\beta^{X_1} \otimes \sigma_\gamma^{Y_2}\otimes \sigma_\epsilon^{Y_1} \right)W^{\mathrm{T}}\right] = 0\ ,
\end{align}
where we have omitted the respective identity matrices, $W \coloneqq W^{B_2B_1A_2A_1}$, $\alpha, \beta, \gamma, \epsilon \geq 1$, $X,Y\in \left\{B,A\right\}$ and $X\neq Y$ when they both appear in the same equation. For simplicity of notation, following the convention of~\cite{OreshkovETAL2012}, we label terms in the decomposition~\eqref{eqn::ProcPauli} of the form $\sigma^{A_2}_\alpha \otimes \openone_{A_1B_2B_1} \ \ (\alpha\geq 1)$ by $A_2$, terms of the form $\openone_{B_2B_1} \otimes \sigma^{A_2}_\alpha \otimes \sigma_{\beta}^{A_1} \ \ (\alpha,\beta\geq 1)$ by $A_2A_1$, etc.. In this notation, for example, the second equation in~\eqref{eqn::ListForbidden} states that terms of the form $A_2A_1$ and $B_2B_1$ cannot appear in a valid process matrix.
%%%%%%%%%%%%%%%%%%%%%%%%%%%%%%%%%%%%%%%%%%%%%%%%%%%%%%%

\FloatBarrier
%%%%%%%%%%%%%%%%%%%%%%%%%%%%%%%%%%%%%%%%%%%%%%%%%%%%%%%%
%%%%%%%%%%%%%%%%%%%%%%%%%%%%%%%%%%%%%%%%%%%%%%%%%%%%%%%%

\bibliographystyle{apsrev4-1}
\bibliography{Bib}

%%%%%%%%%%%%%%%%%%%%%%%%%%%%%%%%%%%%%%%%%%%%%%%%%%%%%%%%
%%%%%%%%%%%%%%%%%%%%%%%%%%%%%%%%%%%%%%%%%%%%%%%%%%%%%%%%
%%%%%%%%%%%%%%%%%%%%%%%%%%%%%%%%%%%%%%%%%%%%%%%%%%%%%%%%
%%%%%%%%%%%%%%%%%%%%%%%%%%%%%%%%%%%%%%%%%%%%%%%%%%%%%%%%

\end{document}